\newtheorem{theorem}{Theorem}[section]
\newtheorem{lemma}[theorem]{Lemma}
\theoremstyle{definition}
\newtheorem{definition}[theorem]{Definition}
\newtheorem{example}[theorem]{Example}
\theoremstyle{remark}
\newtheorem{remark}[theorem]{Remark}
\numberwithin{equation}{section}
\def\bbR{\mathbb{R}}
\def\bbS{\mathbb{S}}
\def\bbT{\mathbb{T}}
\def\bw{\begingroup\textstyle\bigwedge\endgroup}
\def\lra{\longrightarrow}
\def\la{\langle}
\def\ra{\rangle}
\def\tint{\begingroup\textstyle\int\endgroup}
\def\mfo{\mathfrak{o}}
\def\mft{\mathfrak{t}}
\def\vol{\mathrm{vol}}
\def\id{\mathrm{id}}
\def\supp{\mathrm{supp}}
\def\ker{\mathrm{ker}}
\def\dd{\mathrm{d}}
\def\de{\delta}
\def\S{\mathcal{S}}
\def\SscA{\S_{sc\,A}}
\def\SscF{\S_{sc\,F}}
\def\G{\mathcal{G}}
\def\GscA{\G_{sc\,A}}
\def\O{\mathcal{O}}
\def\E{\mathcal{E}}
\def\Ekin{\E^\mathrm{kin}}
\def\Einv{\E^\mathrm{inv}}
\def\c{\mathrm{C}^\infty}
\def\cc{\c_\mathrm{c}}
\def\csc{\c_\mathrm{sc}}
\def\ctc{\c_\mathrm{tc}}
\def\f{\Omega}
\def\fdd{\f_\dd}
\def\fde{\f_\de}
\def\fc{\f_\mathrm{c}}
\def\fcdd{\f_{\mathrm{c}\,\dd}}
\def\fcde{\f_{\mathrm{c}\,\de}}
\def\fsc{\f_\mathrm{sc}}
\def\fscdd{\f_{\mathrm{sc}\,\dd}}
\def\fscde{\f_{\mathrm{sc}\,\de}}
\def\fpc{\f_\mathrm{pc}}
\def\ffc{\f_\mathrm{fc}}
\def\ftc{\f_\mathrm{tc}}
\def\ftcdd{\f_{\mathrm{tc}\,\dd}}
\def\ftcde{\f_{\mathrm{tc}\,\de}}
\def\hdd{\mathrm{H}_\dd}
\def\hde{\mathrm{H}_\de}
\def\hcdd{\mathrm{H}_{\mathrm{c}\,\dd}}
\def\hcde{\mathrm{H}_{\mathrm{c}\,\de}}
\def\hscdd{\mathrm{H}_{\mathrm{sc}\,\dd}}
\def\hscde{\mathrm{H}_{\mathrm{sc}\,\de}}
\def\htcdd{\mathrm{H}_{\mathrm{tc}\,\dd}}
\def\htcde{\mathrm{H}_{\mathrm{tc}\,\de}}
\newcommand{\quotes}[1]{``#1''}
\begin{document}

%\title[Cohomologies with restricted support and Maxwell $k$-forms]
%{De Rham cohomologies with restricted support and observables for Maxwell $k$-forms}
\title[Optimal space of classical observables for Maxwell $k$-forms]
{Optimal space of linear classical observables for Maxwell $k$-forms 
via spacelike and timelike compact de Rham cohomologies}

%    author information
\author{Marco Benini}
\address{Dipartimento di Fisica, Universit\`a di Pavia 
\&~INFN, Sezione di Pavia -- Via Bassi~6, I-27100 Pavia, Italy}
\curraddr{}
\email{marco.benini@pv.infn.it}
\thanks{}

\subjclass[2010]{
81T20, % Quantum theory - Quantum field theory; related classical field theories - Quantum field theory on curved space backgrounds
81T13, % Quantum theory - Quantum field theory; related classical field theories - Yang-Mills and other gauge theories
14F40% Algebraic geometry - (Co)homology theory - de Rham cohomology
}

\keywords{classical field theory on curved spacetimes, Maxwell field, de Rham cohomology}

\date{\today}

\begin{abstract}
Being motivated by open questions in gauge field theories, we consider non-standard de Rham cohomology groups 
for timelike compact and spacelike compact support systems. These cohomology groups are shown 
to be isomorphic respectively to the usual de Rham cohomology of a spacelike Cauchy surface 
and its counterpart with compact support. Furthermore, an analog of the usual Poincar\'e duality 
for de Rham cohomology is shown to hold for the case with non-standard supports as well. 
We apply these results to find {\em optimal} spaces of linear observables 
for analogs of arbitrary degree $k$ of both the vector potential and the Faraday tensor. 
%the classical field theoretical models 
%$\de\dd A=0$ and $\dd F=0,\,\de F=0$, both $A$ and $F$ being $k$-forms for $k\in\{1,\dots,m-1\}$ 
%over a $m$-dimensional globally hyperbolic spacetime $M$ and regarding $A$ and $A^\prime$ 
%as equivalent provided $A^\prime-A\in\dd\f^{k-1}(M)$. 
The term {\em optimal} has to be intended in the following sense: 
The spaces of linear observables we consider distinguish between different configurations; 
in addition to that, there are no redundant observables. %This result is achieved via cohomological arguments. 
This last point in particular heavily relies on the %is shown introducing non-standard de Rham cohomology groups 
%with spacelike compact and timelike compact support and analyzing their properties. 
%Specifically, these cohomology groups are shown to be isomorphic 
%to ordinary de Rham cohomologies of a spacelike Cauchy surface. 
%Furthermore, an 
analog of Poincar\'e duality for the new cohomology groups.% is shown to hold. 
\end{abstract}

\maketitle

\section{Introduction}
Recently, there have been several attempts to deal with the algebraic quantization of electromagnetism 
and, more generally, of gauge theories in the framework of general local covariance \cite{BFV03}. 
It is worth to mention the attempt of \cite{Bon77,DL12} to quantize electromagnetism via the Faraday tensor 
and the various attempts to deal with the gauge theory of the vector potential 
\cite{Dim92,FP03,Pfe09,Dap11,DS13,SDH12,FS13}. 
Furthermore, general linear gauge theories have been analyzed in \cite{HS13}, while a thorough analysis 
of electromagnetism as a $U(1)$ Yang-Mills model can be found in \cite{BDS13b,BDHS13}. 
Many details about general local covariance in the broader family 
of quasilinear field theories are discussed in \cite{Kha12}. 

At some point, all these different approaches exhibit a certain sensitivity 
to specific cohomological properties of the background spacetime causing general local covariance not to hold. 
In particular, several slightly different approaches in defining the space of classical observables 
have been considered in order to recover general local covariance. Already at a classical level, 
this fact raises the question of finding an optimal space of observables for a given field theory. 

Up to now modifications to the space of classical observables have been considered 
with the aim of recovering general local covariance. More specifically, mild variations in the classical models 
have been taken into account in the attempt to prevent spacetime embeddings 
to give rise to non-injective morphisms already at the level of the classical observables. 
Here we would like to assume a different, more intrinsic, approach to the problem of the choice 
of the most suitable space of functionals to be regarded as observables for classical field configurations. 

According to \cite{BFR12}, one regards classical observables as functionals on the space of field configurations. 
In particular, when dealing with linear equations of motion, 
as a starting point one considers linear functionals defined on off-shell field configurations. 
In case the model exhibits a gauge symmetry, only gauge invariant functionals are taken into account. 
At this point one implements also the dynamics in a dual fashion, 
thus the evaluation of the resulting linear functionals makes sense only for (gauge classes of) on-shell configurations. 
The question is now the following: Is the space of functionals obtained according to these prescriptions 
optimal for the classical field theoretical model under analysis? 
We adopt the criteria stated below in order to give a precise meaning 
to the term {\em optimal} employed in the last sentence: 
\begin{description}
\item[Separability of configurations] One might ask whether the resulting space of functionals 
is sufficiently rich in order to distinguish all possible classical field configurations. 
As a matter of fact, it would be unsatisfactory, at least from a classical perspective, 
not to be able to distinguish solutions to the equation of motion 
which are not the same (or which are not regarded as being equivalent due to gauge symmetry). 
\item[Non-redundancy of observables] Another relevant point is redundancy of the resulting space of functionals 
on (gauge classes of) on-shell field configurations. This amounts to require that 
no elements in this space provide the same outcomes upon evaluation 
on all possible solutions to the field equations (eventually, up to gauge). 
If this is not the case, then one is over-counting observables. 
\end{description}
Both requirements trivially hold true for any field theory ruled by a Green-hyperbolic linear differential operator, 
see\ e.g.\ \cite{BG12a,BG12b} for a detailed description of these models also at a quantum level. 
The same conclusion can be easily extended to the affine field theories considered in \cite{BDS13a}. 
They tipically become non-trivial for systems ruled by non-hyperbolic equations with an underlying gauge symmetry. 
In this case separability of field configurations might be unclear due to the restriction to gauge invariant functionals 
one is forced to take into account in defining observables for a gauge theory. 
As for non-redundancy of observables, it is related to the fact that 
on-shell configurations for non-hyperbolic equations are obtained from initial data with certain restrictions. 
For example, separability of configurations was shown to fail for the space of gauge invariant affine functionals 
considered in \cite{BDS13b}, where electromagnetism is regarded as a $U(1)$ Yang-Mills model. 
This failure suggested to modify the structure of the functionals in order to recover the ability 
to separate gauge classes of on-shell field configurations, thus leading to the results of \cite{BDHS13}. 
To the best of our knowledge, the requirement of non-redundancy for the space of observables 
was never explicitly considered before. 

The question we address in this paper is whether separability of field configurations 
and non-redundancy of the space of observables hold true for 
the spaces of linear functionals obtained along the lines sketched above 
when dealing with analogs of the Faraday tensor and of the vector potential in arbitrary degree. 
This would allow us to interpret these spaces as being optimal (with respect to the criteria stated above) 
to define classical observables for the Faraday tensor and for the vector potential in arbitrary degree. 
It turns out that separability of field configurations is trivial for the Faraday tensor, 
while it follows from Poincar\'e duality between de Rham cohomology 
and its counterpart with compact support in the case of the vector potential. 
As for non-redundancy of observables, the problem is more subtle. 
In particular, it is related to certain analogs of the de Rham cohomology groups 
defined out of forms with support contained inside spacelike compact and respectively timelike compact regions. 

We study these cohomology groups with non-standard support in detail in the first part of the paper. 
In particular, we exhibit isomorphisms between spacelike compact and timelike compact cohomologies and the 
usual de Rham cohomologies  of a spacelike Cauchy surface with respectively compact and non-compact support. 
Furthermore, a relation similar to the usual Poincar\'e duality for de Rham cohomologies is show to hold 
between the new cohomology groups. This result plays a central role in proving non-redundancy 
for the space of linear functionals we consider in the case of the Faraday tensor 
as well as the one taken into account when dealing with the gauge theory of the vector potential. 

Unfortunately cohomology groups with spacelike compact or timelike compact support 
do not exhibit a manifest functorial nature. This is due to the fact that 
forms with spacelike compact and timelike compact support do not behave well 
neither in a covariant nor in a contravariant way. 
In this respect, the author would like to mention an upcoming work \cite{Kha14}, where results 
strongly related to those of the present paper are achieved by means of different techniques connected 
to the dynamics of the Laplace-de Rham operator $\Box=\de\dd+\dd\de$ on a globally hyperbolic spacetime. 
This approach should clarify whether and to what extent spacelike compact and timelike compact cohomologies 
fail in showing a functorial behavior. 

The paper is organized as follows: In Section\ \ref{secLorGeom} we recall the definition of 
a globally hyperbolic spacetime, together with some features which are relevant to the present discussion. 
Section\ \ref{secFormsCohomologyDuality} is devoted to provide the mathematical background 
for the study of de Rham cohomology. We get to the point where sufficient structure is available in order to  
to formulate Poincar\'e duality between de Rham cohomology groups with compact support 
and de Rham cohomology groups with restriction on the support. 
In Section\ \ref{secSCCohomology}, spacelike compact cohomology groups 
are introduced and an isomorphism with de Rham cohomology groups with compact support 
of a spacelike Cauchy surface is exhibited. The same procedure is followed in Section\ \ref{secTCCohomology} 
for timelike compact cohomology groups, which are shown to be isomorphic to de Rham cohomology groups 
of a spacelike Cauchy surface. Section\ \ref{secSCTCDuality} is devoted to prove an analog of Poincar\'e duality 
between spacelike compact and timelike compact cohomologies. We conclude with Section\ \ref{secObservables} 
studying the dynamics of the analogs in arbitrary degree of the vector potential 
($\de\dd A=0$ where a field configuration $A$ is given up to exact forms) 
and of the Faraday tensor ($\dd F=0$ and $\de F=0$). 
In particular, we introduce suitable spaces of (gauge invariant, in the case of the vector potential) linear functionals 
modulo equations of motion. Exploiting standard Poincar\'e duality as well as its counterpart 
for spacelike compact and timelike compact cohomologies, it is shown that 
the resulting quotients provide optimal spaces of classical observables 
with respect to the criteria of separability of field configurations and of non-redundancy of observables.

\section{Lorentzian geometry}\label{secLorGeom}
In this section we recollect some well-known facts about Lorentzian geometry
with a particular focus on globally hyperbolic spacetimes. 
As general references we follow \cite{BEE96,BGP07,Wal12}. 

From now on we will assume 
all manifolds to be Hausdorff, second countable, smooth, boundaryless, orientable and of dimension $m\geq2$. 
Furthermore, maps between manifolds are implicitly taken to be smooth. 

\begin{definition}
A Lorentzian manifold $(M,g)$ is a manifold $M$ endowed with a Lorentzian metric $g$, 
whose signature is of type $-+\cdots+$. 
\end{definition}

Lorentzian manifolds provide the appropriate background 
to distinguish tangent vectors $v\in T_pM$ at a point $p\in M$ among three classes: 
\begin{enumerate}
\item[a.] $v$ is {\em timelike} if $g(v,v)<0$; 
\item[b.] $v$ is {\em lightlike} if $g(v,v)=0$; 
\item[c.] $v$ is {\em spacelike} if $g(v,v)>0$. 
\end{enumerate}
{\em Causal} tangent vectors are either timelike or lightlike. 
This distinction can be used to classify certain curves $\gamma:I\to M$ on $M$ 
according to the behavior of the vector $\dot{\gamma}$ tangent to the curve: 
$\gamma$ is timelike, lightlike or causal if $\dot{\gamma}(s)$ is such for each $s\in I$. 

\begin{definition}\label{defSpacetime}
A Lorentzian manifold $(M,g)$ is time-orientable if there exists a vector field $\mft$ which is timelike everywhere. 
Any such $\mft$ provides a {\em time-orientation} for $(M,g)$. 

A spacetime is a quadruple $(M,g,\mfo,\mft)$, 
where $(M,g)$ is a time-orientable Lorentzian manifold, $\mfo$ is a choice of orientation for $M$ 
and $\mft$ is a choice of time-orientation. 
\end{definition}

Fixing a time-orientation $\mft$ enables one 
to distinguish non zero causal tangent vectors $0\neq v\in T_pM$ in two classes: 
\begin{enumerate}
\item[a.] $v$ is {\em future-directed} if $g(\mft,v)<0$; 
\item[b.] $v$ is {\em past-directed} if $g(\mft,v)>0$. 
\end{enumerate}
Accordingly, one can also define causal curves to be past-/future-directed 
if the vector tangent to the curve is future-/past-directed everywhere along the curve. 
We are thus in position to define the {\em causal future/past} $J_M^\pm(S)$ of a subset $S\subseteq M$ 
as the set of points which can be reached by a future-/past-directed causal curve emanating from $S$. 
Similarly, one defines the {\em chronological future/past} $I_M^\pm(S)$ of $S$ 
taking into account only timelike curves. 
Furthermore, it is conventional to introduce $J_M(S)\doteq J_M^+(S)\cup J_M^-(S)$ 
and similarly $I_M(S)\doteq I_M^+(S)\cup I_M^-(S)$. 

\begin{definition}
Let $(M,g,\mfo,\mft)$ be a spacetime. A {\em Cauchy surface }$\Sigma$ for $(M,g,\mfo,\mft)$ is a subset of $M$ 
intersecting each inextensible timelike curve exactly once. 
$(M,g,\mfo,\mft)$ is {\em globally hyperbolic} if there exists a Cauchy surface. 
\end{definition}

Note that, without any further assumption, a Cauchy surface is a $(m-1)$-dimensional submanifold of $M$ 
only in the topological sense. 
However, any Cauchy surface considered in the following 
is implicitly supposed to be a smooth submanifold, unless otherwise stated. 

One of the most prominent results about globally hyperbolic spacetimes is the following theorem \cite{BS05,BS06}. 

\begin{theorem}\label{thmGlobHyp}
On a spacetime $(M,g,\mfo,\mft)$ the statements below are equivalent:
\begin{enumerate}
\item[a.] $(M,g,\mfo,\mft)$ is globally hyperbolic;
\item[b.] $(M,g,\mfo,\mft)$ is isometric to $\bbR\times\Sigma$ 
endowed with the metric $-\beta\dd t\otimes\dd t+h_t$, 
where $\Sigma$ is a spacelike Cauchy surface for $(M,g,\mfo,\mft)$, 
$t:\bbR\times\Sigma\to\bbR$ is the projection on the first factor, $\beta\in\c(\bbR\times\Sigma)$ is strictly positive, 
$\bbR\ni t\mapsto h_t$ provides a smooth $1$-parameter family of Riemannian metrics on $\Sigma$
and $\{t\}\times\Sigma$ is a spacelike Cauchy surface of $(M,g,\mfo,\mft)$ for each $t\in\bbR$. 
\end{enumerate}

Furthermore, any spacelike Cauchy surface $\Sigma$ for $(M,g,\mfo,\mft)$ induces a foliation of $(M,g,\mfo,\mft)$ 
of the type described in statement b. 
\end{theorem}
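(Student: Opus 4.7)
The plan is to dispose of (b) $\Rightarrow$ (a) essentially by inspection, and to invest the real work in (a) $\Rightarrow$ (b) through the construction of a smooth Cauchy temporal function. For (b) $\Rightarrow$ (a), on $\bbR \times \Sigma$ with metric $-\beta\, \dd t \otimes \dd t + h_t$ and $\beta > 0$, the coordinate $t$ is manifestly a time function, and along any inextensible timelike curve $\gamma$ the composition $t \circ \gamma$ is strictly monotone; a standard argument using inextensibility together with $\beta > 0$ forces $t \circ \gamma$ to surject onto $\bbR$, so each slice $\{t_0\} \times \Sigma$ is met by $\gamma$ exactly once and is therefore a Cauchy surface.

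For (a) $\Rightarrow$ (b), first I would assume the existence of a smooth \emph{Cauchy temporal function}, namely a smooth $t \colon M \to \bbR$ whose gradient $\nabla t$ is past-directed timelike everywhere and whose level sets $\Sigma_{t_0} = t^{-1}(t_0)$ are all smooth spacelike Cauchy surfaces. Set $\Sigma = t^{-1}(0)$ and $N = \nabla t / g(\nabla t, \nabla t)$; then $N$ is future-directed timelike (the sign flipping because $g(\nabla t, \nabla t) < 0$), satisfies $N(t) = 1$, and is orthogonal to each level set of $t$. Along any integral curve of $N$ the function $t$ grows linearly in the curve parameter, and the Cauchy property of each $\Sigma_{t_0}$ yields both completeness of the flow and the fact that the resulting map $\bbR \times \Sigma \to M$, $(s,p) \mapsto \phi_s(p)$, is a diffeomorphism. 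Pulling back $g$ along this map, the relations $\dd t(N) = 1$, $g(N,\cdot)|_{T\Sigma_{t_0}} = 0$, and $g(N,N) = 1/g(\nabla t, \nabla t)$ immediately produce the splitting $-\beta\, \dd t \otimes \dd t + h_t$ with $\beta = -1/g(\nabla t, \nabla t) > 0$ and $h_t$ the induced Riemannian metric on $\Sigma_t$.

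The hard part, and the substance of \cite{BS05,BS06}, is precisely the construction of the smooth Cauchy temporal function. Geroch's classical argument produces only a continuous Cauchy time function whose level sets are merely topological Cauchy surfaces, and naive smoothing can convert the gradient from timelike to null on negligible exceptional sets or destroy the Cauchy property of some level sets. The strategy is to exhaust $M$ by a locally finite family of relatively compact globally hyperbolic pieces, construct on each a partial smooth temporal function with quantitative lower bounds on the strength of its timelike gradient, and combine these into a single globally defined function through a carefully weighted convex series whose estimates rule out the pathologies above. The final assertion of the theorem follows from a refinement of the same scheme in which one prescribes $t^{-1}(0)$ to coincide with any given spacelike Cauchy surface $\Sigma$; the flow of the associated vector field $N$ then provides the promised foliation.
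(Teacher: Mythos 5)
The paper does not prove this statement at all: it is quoted as a known result of Bernal and S\'anchez, with the proof delegated entirely to \cite{BS05,BS06}. Your outline is therefore not in competition with an argument in the paper; it is essentially a summary of the architecture of the cited proof, and as such it identifies the right ingredients --- the reduction of (a) $\Rightarrow$ (b) to the existence of a smooth Cauchy temporal function, the splitting of the metric via the flow of the normalized gradient $N=\nabla t/g(\nabla t,\nabla t)$, and the refinement in \cite{BS06} that lets one prescribe $t^{-1}(0)$ to be any given spacelike Cauchy surface. But it remains a roadmap rather than a proof: the step you yourself flag as ``the hard part'' (producing a smooth temporal function all of whose level sets are Cauchy, avoiding the degeneracies of naive smoothings of Geroch's time function) is exactly the content of the cited papers and is not carried out.

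One concrete error worth flagging is your treatment of (b) $\Rightarrow$ (a). You argue that the form $-\beta\,\dd t\otimes\dd t+h_t$ with $\beta>0$ forces $t\circ\gamma$ to be onto $\bbR$ for every inextensible timelike curve $\gamma$, hence that each slice is Cauchy. This is false as stated: a metric of this orthogonal product form on $\bbR\times\Sigma$ need not be globally hyperbolic (e.g.\ causal curves may run off to spatial infinity in finite coordinate time if $h_t$ or $\beta$ is badly behaved), which is precisely why statement (b) \emph{includes} the requirement that each $\{t\}\times\Sigma$ be a spacelike Cauchy surface as part of its hypotheses. The implication (b) $\Rightarrow$ (a) is thus immediate from the definition of global hyperbolicity --- a Cauchy surface exists by assumption --- and needs no argument; the argument you supply in its place proves a stronger claim that does not hold.
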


From now on, we will often refer to a globally hyperbolic spacetime $(M,g,\mfo,\mft)$ 
explicitly denoting only the underlying manifold $M$, the rest of the data being understood. 

We conclude the present section recalling some nomenclature 
for subsets $S$ of a globally hyperbolic spacetime $M$: 
\begin{enumerate}
\item[{\bf pc/fc:}] $S\subseteq M$ is {\em past compact} ({\em future compact}) 
if $S\cap J_M^-(K)$ (respectively $S\cap J_M^+(K)$) is compact for each $K\subseteq M$ compact; 
\item[{\bf tc:}] $S\subseteq M$ is {\em timelike compact} if it is both past-compact and future-compact; 
\item[{\bf sc:}] $S\subseteq M$ is {\em spacelike compact} 
if it is closed and there exists $K\subseteq M$ compact such that $S\subseteq J_M(K)$. 
\end{enumerate}

\section{Forms, de Rham cohomology and Poincar\'e duality}\label{secFormsCohomologyDuality}
This section is devoted to recall some standard results 
about de Rham cohomology for arbitrary differential forms and differential forms with compact support. 
The final aim is to provide the sufficient background to state a version of Poincar\'e duality 
between cohomology and cohomology with compact support. 
We take the chance to set some notation which will be used later on. 
Our references are the classic books \cite{deR84,BT82}. 

As it is customary, the space of differential forms $\f^k(M)$ of degree $k$ on a $m$-dimensional manifold $M$ 
is introduced as the space of sections of the $k$-th exterior power 
$\bw^k(T^\ast M)$ of the cotangent bundle $T^\ast M$. 
Therefore $\f^\ast(M)=\bigoplus_k\f^k(M)$ is naturally endowed 
with the structure of a graded algebra with respect to the {\em wedge product} $\wedge$. 
One might also consider forms with compact support on $M$, denoted by $\fc^\ast(M)$. 

If $M$ is oriented, a natural notion of integral is defined for forms of top degree $k=m$. 
In particular, we have a bilinear pairing $\la\cdot,\cdot\ra$ between $k$-forms and $(m-k)$-forms: 
\begin{equation}\label{eqPairing1}
\la\alpha,\beta\ra=\tint_M\alpha\wedge\beta\,,
\end{equation}
where $\alpha\in\f^k(M)$ and $\beta\in\f^{m-k}(M)$ are supposed to have supports with compact intersection. 
In particular, one has a non-degenerate pairing $\la\cdot,\cdot\ra:\fc^k(M)\times\f^{m-k}(M)\to\bbR$. 

If $M$ is endowed with a metric and an orientation 
(for example, this is the case for a spacetime according to Definition\ \ref{defSpacetime}), 
the {\em Hodge star} operator $\ast:\bw^k(T^\ast M)\to\bw^{m-k}(T^\ast M)$ is defined, 
see\ e.g.\ \cite[Section\ 3.3]{Jos11}. $\ast$ induces a non-denerate inner product on $\bw^k(T^\ast M)$ defined by 
\begin{equation}\label{eqContraction}
(\xi,\eta)\in\bw^k(T^\ast M)\times\bw^k(T^\ast M)\mapsto\ast^{-1}(\xi\wedge\ast\eta)\in\bbR\,,
\end{equation} 
while $\ast1\in\f^m(M)$ defines the standard {\em volume form} $\vol$ for a pseudo-Riemannian oriented manifold. 
We get a symmetric pairing $(\cdot,\cdot)$ between $k$-forms 
integrating the inner product defined in eq.\ \eqref{eqContraction} with the volume form $\vol$: 
\begin{equation}\label{eqPairing2}
(\alpha,\beta)=\tint_M\ast^{-1}(\alpha\wedge\ast\beta)\vol=\tint_M\alpha\wedge\ast\beta
=\la\alpha,\ast\beta\ra\,,
\end{equation}
where $\alpha,\beta\in\f^k(M)$ have supports with compact intersection. 
In particular, $(\cdot,\cdot)$ is a non-degenerate bilinear pairing between $\fc^k(M)$ and $\f^k(M)$. 

For differential forms of any degree on a $m$-dimensional manifold $M$, 
one has the {\em differential} $\dd:\f^k(M)\to\f^{k+1}(M)$ forming the well-known {\em de Rham complex}: 
\begin{equation}\label{eqddComplex}
0\lra\f^0(M)\overset{\dd}{\lra}\f^1(M)\overset{\dd}{\lra}\cdots\overset{\dd}{\lra}\f^m(M)\lra0\,.
\end{equation}
From this complex, for each $k\in\{0,\dots,m\}$, one defines {\em de Rham cohomology groups} 
$\hdd^k(M)=\fdd^k(M)/\dd\f^{k-1}(M)$ taking the quotient between the kernel 
$\fdd^k(M)=\ker\big(\dd:\f^k(M)\to\f^{k+1}(M)\big)$ of $\dd$ and its image $\dd\f^{k-1}(M)$. 
One has a similar complex when restricting to compact supports, 
giving rise to {\em compactly supported de Rham cohomology groups} 
$\hcdd^k(M)=\fcdd^k(M)/\dd\fc^{k-1}(M)$, $k\in\{0,\dots,m\}$, 
where $\fcdd^k(M)=\ker\big(\dd:\fc^k(M)\to\fc^{k+1}(M)\big)$. 

A major property of the differential $\dd$ in relation to integration 
on an oriented manifold $M$ is the well-known {\em Stokes' theorem}. 

\begin{theorem}\label{thmStokes}
Let $M$ be a $m$-dimensional oriented manifold (without boundary) and $\omega\in\fc^{m-1}(M)$. 
Then $\int_M\dd\omega=0$. 
\end{theorem}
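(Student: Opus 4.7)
The plan is to reduce to the model computation on $\bbR^m$ by means of a partition of unity subordinate to an oriented atlas. First I would fix a covering of $M$ by positively oriented coordinate charts $\{(U_\alpha,\varphi_\alpha)\}$ together with a smooth partition of unity $\{\chi_\alpha\}$ subordinate to it. Since $\supp(\omega)$ is compact, it meets only finitely many of the sets $\supp(\chi_\alpha)$, so the decomposition $\omega=\sum_\alpha \chi_\alpha\omega$ is effectively a finite sum of forms $\omega_\alpha = \chi_\alpha\omega \in \fc^{m-1}(M)$, each of which has support contained in a single chart $U_\alpha$. By linearity of $\dd$ and of the integral, it suffices to prove $\tint_M \dd\omega_\alpha = 0$ for every $\alpha$; transporting $\omega_\alpha$ via the orientation-preserving diffeomorphism $\varphi_\alpha$ then reduces the task to showing $\tint_{\bbR^m}\dd\eta = 0$ for arbitrary $\eta \in \fc^{m-1}(\bbR^m)$.

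For the local computation I would expand $\eta$ in the standard basis of $\bw^{m-1}(T^\ast\bbR^m)$ as
\[
\eta = \sum_{i=1}^m (-1)^{i-1}\, f_i\,\dd x^1 \wedge \cdots \wedge \widehat{\dd x^i} \wedge \cdots \wedge \dd x^m\,,
\]
where $f_i \in \cc(\bbR^m)$ and the hat denotes omission. A direct calculation using $\dd x^j \wedge \dd x^j = 0$ gives $\dd\eta = \bigl(\sum_{i=1}^m \partial_i f_i\bigr)\,\dd x^1 \wedge \cdots \wedge \dd x^m$. Applying Fubini's theorem and performing the $x^i$-integration first in the $i$-th summand, the fundamental theorem of calculus yields $\tint_\bbR \partial_i f_i\,\dd x^i = 0$, since $f_i$ has compact support and in particular vanishes at $x^i = \pm\infty$. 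Summing over $i$ gives $\tint_{\bbR^m}\dd\eta = 0$, closing the argument.

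There is no serious obstacle here, as this is the classical version of Stokes' theorem for a boundaryless oriented manifold. The only technical care required lies in the partition of unity step: one must observe that compactness of $\supp(\omega)$ forces the decomposition into a finite sum, and that each $\omega_\alpha$ lies in $\fc^{m-1}(M)$ with support inside a chart, so that pulling back along $\varphi_\alpha$ commutes with $\dd$ and produces a compactly supported form on $\bbR^m$ to which the local computation applies without sign ambiguity.
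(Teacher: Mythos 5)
Your proof is correct: the paper states Theorem \ref{thmStokes} without proof, citing it as a classical result from the standard references, and your argument (partition of unity subordinate to an oriented atlas, reduction to a compactly supported $(m-1)$-form on $\bbR^m$, then Fubini and the fundamental theorem of calculus) is precisely the standard textbook proof, with the signs in the local computation handled correctly. Nothing further is needed.
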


When $M$ is endowed with a metric and an orientation, 
one can use the Hodge star to introduce another differential operator, 
the {\em codifferential} $\de=(-1)^k\ast^{-1}\dd\ast:\f^k(M)\to\f^{k-1}(M)$. 
Using Stokes' theorem, one can easily show that $\de$ is the formal adjoint of $\dd$, 
that is to say $(\de\alpha,\beta)=(\alpha,\dd\beta)$ 
for each $\alpha\in\f^{k+1}(M)$ and $\beta\in\f^k(M)$ with $\supp(\alpha)\cap\supp(\beta)$ compact. 
Trivially $\de$ gives rise to a second complex (with decreasing degree), 
which is isomorphic to \eqref{eqddComplex} via $\ast$: 
\begin{equation}\label{eqdeComplex}
0\lra\f^m(M)\overset{\de}{\lra}\f^{m-1}(M)\overset{\de}{\lra}\cdots\overset{\de}{\lra}\f^0(M)\lra0\,.
\end{equation}
Defining $\fde^k(M)$ as the kernel of $\de:\f^k(M)\to\f^{k-1}(M)$ 
and $\fcde^k(M)$ as the kernel of $\de:\fc^k(M)\to\fc^{k-1}(M)$, 
one gets cohomology groups $\hde^k(M)=\fde^k(M)/\de\f^{k+1}(M)$ 
and cohomology groups with compact support $\hcde^k(M)=\fcde^k(M)/\de\fc^{k+1}(M)$, $k\in\{0,\dots,m\}$. 
It is easy to check that the Hodge star operator provides an isomorphism
between the cohomology groups for $\de$ and those defined for $\dd$, 
namely $\hdd^k(M)\simeq\hde^{m-k}(M)$ and $\hcdd^k(M)\simeq\hcde^{m-k}(M)$. 

Theorem\ \ref{thmStokes} entails that the pairings $\la\cdot,\cdot\ra$ (for oriented manifolds) 
and $(\cdot,\cdot)$ (for oriented pseudo-Riemannian manifolds) induce pairings between cohomology groups: 
\begin{subequations}\label{eqPoincarePairing}
\begin{align}
\la\cdot,\cdot\ra & :\hcdd^k(M)\times\hdd^{m-k}(M)\to\bbR\,,\\
{}_\de(\cdot,\cdot) & :\hcde^k(M)\times\hdd^k(M)\to\bbR\,,\\
(\cdot,\cdot)_\de & :\hcdd^k(M)\times\hde^k(M)\to\bbR\,.
\end{align}
\end{subequations}
We conclude the present section stating a version of Poincar\'e duality 
for de Rham cohomologies, see\ \cite[Section\ 1.5]{BT82}. 
This requires the notion of a {\em good cover} for a $m$-dimensional manifold $M$, 
namely a cover by open sets such that the intersection of the elements of any finite subset of the cover 
is either empty or diffeomorphic to $\bbR^m$. 

\begin{theorem}\label{thmPoincareDuality}
Let $M$ be an oriented manifold which admits a finite good cover. 
Then the pairing $\la\cdot,\cdot\ra$ between $\hcdd^k(M)$ and $\hdd^{m-k}(M)$ is non-degenerate. 

If $M$ is also endowed with a metric, 
the pairing ${}_\de(\cdot,\cdot)$ between $\hcde^k(M)$ and $\hdd^k(M)$ 
is non-degenerate, as well as the pairing $(\cdot,\cdot)_\de$ between $\hcdd^k(M)$ and $\hde^k(M)$. 
\end{theorem}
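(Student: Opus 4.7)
The first statement is the classical Poincaré duality; the standard approach (as in Bott--Tu) is an induction on the cardinality $n$ of a finite good cover $\{U_1,\dots,U_n\}$ of $M$. The setup consists of two ingredients: first, the de Rham and compactly supported de Rham complexes both admit Mayer--Vietoris sequences, the former contravariant in open inclusions and the latter covariant via extension by zero, producing long exact sequences in cohomology for any decomposition $M=U\cup V$; second, the pairing $\langle\cdot,\cdot\rangle$ descends to cohomology by Theorem~\ref{thmStokes}, since $\langle \dd\alpha,\beta\rangle\pm\langle\alpha,\dd\beta\rangle=\langle\dd(\alpha\wedge\beta),\cdot\rangle$ vanishes when one factor is compactly supported.

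For the base case $n=1$, one has $M\cong\bbR^m$, and the Poincaré lemma gives $\hdd^0(\bbR^m)\simeq\bbR$ with all higher groups trivial, while for compactly supported cohomology one computes $\hcdd^m(\bbR^m)\simeq\bbR$ with all lower groups trivial. A direct evaluation of $\la 1,\omega\ra=\int_{\bbR^m}\omega$ on any compactly supported top form of nonzero integral shows non-degeneracy. For the inductive step, assume the result for all manifolds admitting a good cover of size $<n$. Writing $U=U_1\cup\cdots\cup U_{n-1}$ and $V=U_n$, both $U$, $V$ and $U\cap V$ admit good covers of strictly smaller size (the intersections of a good cover being good themselves by definition). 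One then glues the Mayer--Vietoris long exact sequence for $\hdd^\ast$ on $U,V$ with the dualised sequence for $\hcdd^\ast$, producing a ladder in which the vertical arrows are the Poincaré pairings. The key computational point is that the connecting homomorphisms of the two sequences are compatible with the pairing up to sign; once this is checked on representatives using a partition of unity subordinate to $\{U,V\}$, the five lemma yields the inductive step.

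The two pairings involving $\de$ follow formally from the first, together with the isomorphism $\ast$ already discussed between the $\dd$ and $\de$ cohomologies. Indeed, $\de\ast=\pm\ast\dd$ (and similarly for compactly supported forms), whence $\ast$ intertwines $\hdd^{m-k}$ with $\hde^k$ and $\hcdd^{m-k}$ with $\hcde^k$. Using $(\alpha,\beta)=\la\alpha,\ast\beta\ra$ from eq.~\eqref{eqPairing2}, the pairing $(\cdot,\cdot)_\de:\hcdd^k\times\hde^k\to\bbR$ is identified via $\ast$ with $\la\cdot,\cdot\ra:\hcdd^k\times\hdd^{m-k}\to\bbR$, and analogously for ${}_\de(\cdot,\cdot)$. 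Non-degeneracy is preserved under composition with an isomorphism on either slot, so both statements reduce to the first one.

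The main obstacle, as usual in this argument, is bookkeeping the signs so that the Mayer--Vietoris ladder commutes, and verifying that the partition of unity extending $\alpha\in\fcdd^k(U\cap V)$ to representatives on $U$ and $V$ interacts with $\wedge$ and $\int$ in the manner required for pairing compatibility. Finiteness of the good cover is used essentially only to ensure finite-dimensionality (so that non-degeneracy on one side gives it on the other) and to terminate the induction; apart from this the argument is structural.
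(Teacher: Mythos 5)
Your proposal is correct and follows essentially the same route as the paper: the first statement is the classical Poincar\'e duality, for which the paper simply cites the Mayer--Vietoris induction of Bott--Tu (Section~1.5) that you reproduce, and the two pairings involving $\de$ are derived exactly as you do, by transporting $\la\cdot,\cdot\ra$ through the Hodge star isomorphisms $\hde^k(M)\simeq\hdd^{m-k}(M)$ and $\hcde^k(M)\simeq\hcdd^{m-k}(M)$ using $(\alpha,\beta)=\la\alpha,\ast\beta\ra$. Your remark on where finiteness of the good cover enters (finite-dimensionality giving non-degeneracy in the first slot) matches the paper's Remark~\ref{remPoincareDualityImproved}.
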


The second part of the statement easily follows from the first one taking into account that 
the Hodge star induces isomorphisms $\hcde^k(M)\simeq\hcdd^{m-k}(M)$ and $\hde^k(M)\simeq\hdd^{m-k}(M)$ 
and the pairing $(\cdot,\cdot)$ for forms is symmetric, while $\la\cdot,\cdot\ra$ is graded-symmetric. 
Specifically one has 
\begin{align*}
{}_\de([\alpha],[\beta]) & =(-1)^{k(m-k)}\la[\ast\alpha],[\beta]\ra\,,\quad
& \forall\alpha\in\fcde^k(M),\beta\in\fdd^k(M)\,,\\
([\alpha],[\beta])_\de & =\la[\alpha],[\ast\beta]\ra\,,\quad & \forall\alpha\in\fcdd^k(M),\beta\in\fde^k(M)\,.
\end{align*}

\begin{remark}\label{remPoincareDualityImproved}
Actually one has a bit more than Theorem\ \ref{thmPoincareDuality}. 
Specifically, even if $M$ does not admit a finite good cover, 
the map defined below is an isomorphism of vector spaces, see\ \cite[Section\ V.4]{GHV72}: 
\begin{equation*}
\hdd^{m-k}(M)\to\big(\hcdd^k(M)\big)^\ast\,,\quad[\beta]\mapsto\la\cdot,[\beta]\ra\,.
\end{equation*}
When $M$ does not admit a finite good cover, what might happen is the failure of 
$[\alpha]\in\hcdd^k(M)\mapsto\la[\alpha],\cdot\ra\in\big(\hdd^{m-k}(M)\big)^\ast$ 
being an isomorphism, see\ \cite[Remark 5.7]{BT82}. 
Therefore the pairing \eqref{eqPoincarePairing} is non-degenerate in the second argument 
for any oriented manifold $M$ and this is also the case for the first argument if $M$ has a finite good cover. 

Similar conclusions hold true for $_\de(\cdot,\cdot)$ and $(\cdot,\cdot)_\de$ as well. 
\end{remark}

\section{Spacelike compact cohomology}\label{secSCCohomology}
This section is devoted to construct an isomorphism between 
a special type of cohomology with spacelike compact support for a globally hyperbolic spacetime 
and the cohomology with compact support of a spacelike Cauchy surface. 

For a globally hyperbolic spacetime $M$, 
we denote the space of $k$-forms with spacelike compact support with $\fsc^k(M)$. 
Since both $\dd$ and $\de$ preserve supports, 
one has complexes similar to \eqref{eqddComplex} and \eqref{eqdeComplex} with restricted supports, 
which have to be spacelike compact. This gives rise to new cohomology groups. 

\begin{definition}
Let $M$ be a globally hyperbolic spacetime. We define {\em spacelike compact cohomology groups} according to 
\begin{equation*}
\hscdd^k(M)=\frac{\fscdd^k(M)}{\dd\fsc^{k-1}(M)}\,,
\quad\hscde^k(M)=\frac{\fscde^k(M)}{\de\fsc^{k+1}(M)}\,,
\end{equation*}
where $\fscdd^k(M)=\ker\big(\dd:\fsc^k(M)\to\fsc^{k+1}(M)\big)$, 
while $\fscde^k(M)=\ker\big(\de:\fsc^k(M)\to\fsc^{k-1}(M)\big)$. 
\end{definition}

Also in this case the Hodge star $\ast$ provides an isomorphism between $\hscdd^k(M)$ and $\hscde^{m-k}(M)$. 

To show that $\hscdd^k(M)\simeq\hcdd^{k}(\Sigma)$, $\Sigma$ being a spacelike Cauchy surface for $M$, 
we adopt the following strategy: 
\begin{enumerate}
\item[1.] Theorem\ \ref{thmGlobHyp} ensures that any globally hyperbolic spacetime $M$ is isometric 
to a globally hyperbolic spacetime $M_\Sigma$, whose underlying manifold is $\bbR\times\Sigma$, 
$\Sigma$ being a spacelike Cauchy surface for $M$. 
Therefore one gets $\hscdd^k(M)\simeq\hscdd^k(M_\Sigma)$; 
\item[2.] Exploiting the fact that $M_\Sigma$ is explicitly factored as $\bbR\times\Sigma$, 
we construct an isomorphism $\hscdd^k(M_\Sigma)\simeq\hcdd^{k}(\Sigma)$. 
\end{enumerate}

\begin{remark}
For $k=m$ the isomorphism in item 2 means that $\hscdd^m(M_\Sigma)$ is trivial 
since $\Sigma$ is a manifold of dimension $m-1$. 

As for the isomorphism in item 1, it depends on the specific choice of a foliation $M_\Sigma$ of $M$, 
yet all spacelike Cauchy surfaces of $M$ are diffeomorphic, thus possessing isomorphic cohomology groups. 
\end{remark}

Let us consider a globally hyperbolic spacetime $M$ and a foliation of the type 
provided by Theorem\ \ref{thmGlobHyp}. As above, it is convenient to denote such foliation with $M_\Sigma$. 
Recalling that, as a manifold, $M_\Sigma$ has the form $\bbR\times\Sigma$, we define projections on each factor: 
\begin{equation}\label{eqProj}
t:M_\Sigma\to\bbR\,,\quad\pi:M_\Sigma\to\Sigma\,.
\end{equation}
We introduce also the section $s$ of $\pi$ defined by 
\begin{equation}\label{eqSect}
s:\Sigma\to M_\Sigma\,,\quad x\mapsto(0,x)\,,
\end{equation}
Note that the image of $s$ is the spacelike Cauchy surface $\{0\}\times\Sigma$ of $M_\Sigma$. 
Since the intersection of a spacelike compact region with a spacelike Cauchy surface is always compact 
(refer to \cite[Corollary\ A.5.4]{BGP07}), 
the pullback via $s$ of a spacelike compact form on $M_\Sigma$ is compactly supported on $\Sigma$. 
Furthermore, each compact subset of $\Sigma$ has spacelike compact preimage in $M_\Sigma$ under $\pi$, 
therefore the pullback via $\pi$ of a compactly supported form on $\Sigma$ 
has spacelike compact support on $M_\Sigma$: 
\begin{equation*}
\pi^\ast:\fc^k(\Sigma)\to\fsc^k(M_\Sigma)\,,\quad s^\ast:\fsc^k(M_\Sigma)\to\fc^k(\Sigma)\,.
\end{equation*}
Trivially, $s^\ast\pi^\ast=\id_{\fc^k(\Sigma)}$, hence a similar identity holds true in cohomology. 
If one can prove that $\pi^\ast s^\ast$ is chain homotopic to $\id_{\fsc^\ast(M_\Sigma)}$, 
then $\pi^\ast s^\ast$ induces a map in cohomology which coincides with the identity . 
This result is achieved in Lemma\ \ref{lemSCHomotopy} below. 
In particular, $\pi^\ast$ provides the sought isomorphism $\hcdd^k(\Sigma)\simeq\hscdd^k(M_\Sigma)$. 

The relevant chain homotopy is defined following \cite[Section\ I.4, pp.\ 33--35]{BT82}. 
Due to the slightly different setting considered here, 
we have to be careful with the support properties of the chain homotopy we want to consider. 
Specifically, we will observe that the chain homotopy mentioned above is such that 
any form with spacelike compact support is mapped to a form with spacelike compact support as well. 
Therefore the formula given in \cite{BT82} provides 
an appropriate chain homotopy for spacelike compact cohomologies too. 
The rest of the present section is devoted to a detailed description of this argument. 

First, we note that spacelike compact $k$-forms on $M_\Sigma$ are always given 
by a linear combination of two types of forms:\footnote{From 
now on the wedge product will be often suppressed to lighten the notation.} 
\begin{align}
(\pi^\ast\phi)\,f\,, & \quad\phi\in\f^k(\Sigma),\,f\in\csc(M_\Sigma)\,,\tag{$1_\mathrm{sc}$}\label{eqType1SC}\\
(\pi^\ast\psi)\,h\dd t\,, & \quad\psi\in\f^{k-1}(\Sigma),\,
h\in\csc(M_\Sigma)\tag{$2_\mathrm{sc}$}\label{eqType2SC}\,.
\end{align}

The main point of the present section is to note that, given $f\in\csc(M_\Sigma)$, 
$\tilde{f}:(t,x)\mapsto\int_0^tf(s,x)\dd s$ is still in $\csc(M_\Sigma)$. 
Since the support of $f$ is spacelike compact, there exists a compact subset $K\subseteq M_\Sigma$ 
such that $\supp(f)\subseteq J_{M_\Sigma}(K)$. 
We consider the compact set $\tilde{K}=\big(\{0\}\times\Sigma\big)\cap J_{M_\Sigma}(K)$ 
and we show that $\tilde{f}=0$ outside $J_{M_\Sigma}(\tilde{K})$. 
Let $(t,x)\in M_\Sigma\setminus J_{M_\Sigma}(\tilde{K})$. 
This entails that the timelike curve $s\in[0,t]\mapsto(s,x)\in M_\Sigma$ does not meet $\tilde{K}$. 
By construction $J_{M_\Sigma}(K)\subseteq J_{M_\Sigma}(\tilde{K})$, 
therefore $f(s,x)=0$ for each $s\in[0,t]$ and hence $\tilde{f}(t,x)=0$.  
 
With this fact in mind, one defines 
\begin{align*}
P:\fsc^k(M_\Sigma) & \to\fsc^{k-1}(M_\Sigma)\,,\\
(\pi^\ast\phi)\,f & \mapsto0\,,\\
(\pi^\ast\psi)\,h\dd t & \mapsto(\pi^\ast\psi)\,\tint_0^\cdot h(s,\cdot)\dd s\,.
\end{align*}

\begin{lemma}\label{lemSCHomotopy}
Let $P:\fsc^k(M_\Sigma)\to\fsc^{k-1}(M_\Sigma)$ be defined as above. 
Then $P$ is a {\em chain homotopy operator} between $\pi^\ast s^\ast$ and $\id_{\fsc^\ast(M_\Sigma)}$, 
that is to say $\pi^\ast s^\ast-\id_{\fsc^k(M_\Sigma)}=(-1)^k(\dd\,P-P\,\dd)$ 
on $\fsc^k(M_\Sigma)$ for each $k\in\{0,\dots,m\}$. 
\end{lemma}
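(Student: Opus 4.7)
The plan is to reduce the identity to a case analysis on the two types of spacelike compact forms \eqref{eqType1SC} and \eqref{eqType2SC}, since every element of $\fsc^k(M_\Sigma)$ is a (locally finite) linear combination of forms of those two types, and both sides of the asserted identity are $\bbR$-linear in $\omega$. The support properties of $P$ — that it sends $\fsc^\ast(M_\Sigma)$ into itself — have already been observed in the paragraph preceding the lemma via the compact set $\tilde K = (\{0\}\times\Sigma)\cap J_{M_\Sigma}(K)$, so only the algebraic identity remains.

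First I would handle type \eqref{eqType1SC}: $\omega = (\pi^\ast\phi)\,f$ with $\phi\in\f^k(\Sigma)$ and $f\in\csc(M_\Sigma)$. Here $P\omega=0$ by definition, so the identity reduces to $-(-1)^k P\dd\omega = \pi^\ast s^\ast\omega-\omega$. I would split the exterior differential of $f$ along the product structure as $\dd f = (\partial_t f)\,\dd t + \dd_\Sigma f$, where $\dd_\Sigma f$ collects the terms with no $\dd t$. This gives
\[
\dd\omega = \pi^\ast(\dd\phi)\,f + (-1)^k(\pi^\ast\phi)\,\dd_\Sigma f + (-1)^k(\pi^\ast\phi)(\partial_t f)\,\dd t,
\]
where the first two summands are of type \eqref{eqType1SC} (hence annihilated by $P$) and the last is of type \eqref{eqType2SC}. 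Applying $P$ and the fundamental theorem of calculus yields $P\dd\omega = (-1)^k(\pi^\ast\phi)\bigl(f-f\circ s\circ\pi\bigr)$, which after multiplication by $-(-1)^k$ is exactly $\pi^\ast s^\ast\omega-\omega$, using $\pi\circ s=\id_\Sigma$.

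Next I would treat type \eqref{eqType2SC}: $\omega=(\pi^\ast\psi)\,h\,\dd t$ with $\psi\in\f^{k-1}(\Sigma)$ and $h\in\csc(M_\Sigma)$. Since $s^\ast\dd t = \dd(t\circ s)=0$, we have $\pi^\ast s^\ast\omega=0$, so the identity reduces to $(-1)^k(\dd P-P\dd)\omega=-\omega$. Setting $\tilde h(t,x)=\int_0^t h(s,x)\,\dd s$, one computes $\dd(P\omega)=\pi^\ast(\dd\psi)\tilde h+(-1)^{k-1}(\pi^\ast\psi)\dd_\Sigma\tilde h+(-1)^{k-1}(\pi^\ast\psi)\,h\,\dd t$, using $\partial_t\tilde h=h$. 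On the other hand $\dd(h\,\dd t)=\dd_\Sigma h\wedge\dd t$ because $\dd t\wedge\dd t=0$, so $\dd\omega=\pi^\ast(\dd\psi)\,h\,\dd t+(-1)^{k-1}(\pi^\ast\psi)\,\dd_\Sigma h\wedge\dd t$. Applying $P$ (both terms are of type \eqref{eqType2SC}) and noting that the ``interior integral'' and $\dd_\Sigma$ commute (by differentiation under the integral sign, justified on $\csc$), one obtains $P\dd\omega=\pi^\ast(\dd\psi)\tilde h+(-1)^{k-1}(\pi^\ast\psi)\dd_\Sigma\tilde h$. Subtracting gives $\dd P\omega-P\dd\omega=(-1)^{k-1}(\pi^\ast\psi)\,h\,\dd t=(-1)^{k-1}\omega$, so multiplying by $(-1)^k$ yields $-\omega$, as required.

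The genuine content is not conceptual but bookkeeping: one must keep the signs straight and justify the commutation of $\dd_\Sigma$ with $\int_0^\cdot$, and one must emphasise once more that $P$ stays within $\fsc^\ast(M_\Sigma)$ so that the identity is meaningful in the claimed complex. These are the only points I expect to require care; everything else is a verbatim adaptation of the standard Poincar\'e lemma chain homotopy from \cite[Section~I.4]{BT82}.
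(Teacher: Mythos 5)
Your proposal is correct and follows essentially the same route as the paper's proof: decompose into the two types of spacelike compact forms, apply the fundamental theorem of calculus for the $\dd t$-component, and commute $\int_0^\cdot$ with the spatial differential, with all signs checking out. The only cosmetic difference is your use of the splitting $\dd f=\dd_\Sigma f+(\partial_tf)\,\dd t$ in place of the paper's explicit coordinate expressions $\dd x^i\,\partial_i$.
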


\begin{proof}
The proof is nothing but a calculation identical to the one in \cite[Section\ I.4, pp.\ 33--35]{BT82}. 
We give a sketch for the sake of completeness. 

The calculation is performed choosing an atlas for $\Sigma$ and extending it to an atlas for $M_\Sigma$. 

We consider first the case of $k$-forms of type \eqref{eqType1SC}: 
\begin{align*}
(\dd\,P-P\,\dd)\big((\pi^\ast\phi)f\big) & =-P\big((\pi^\ast\dd\phi)\,f+(-1)^k(\pi^\ast\phi)\,\dd f\big)\\
& =(-1)^{k+1}(\pi^\ast\phi)\,\tint_0^\cdot\partial_sf(s,\cdot)\dd s\\
& =(-1)^{k+1}\big((\pi^\ast\phi)\,f-\pi^\ast\big(\phi\, f(0,\cdot)\big)\big)\\
& =(-1)^k\big(\pi^\ast s^\ast-\id_{\fsc^k(M_\Sigma)}\big)\big((\pi^\ast\phi)\,f\big)\,.
\end{align*}

The computation for $k$-forms of type \eqref{eqType2SC} is a bit more involved. 
As a first step, we compute the $\dd\,P$-term: 
\begin{align*}
\dd\,P\big((\pi^\ast\psi)\,h\dd t\big) & =\dd\big((\pi^\ast\psi)\,\tint_0^\cdot h(s,\cdot)\dd s\big)\\
& =(\pi^\ast\dd\psi)\,\tint_0^\cdot h(s,\cdot)\dd s\\
& \quad+(-1)^{k-1}(\pi^\ast\psi)\,\big(\dd x^i\partial_i\tint_0^\cdot h(s,\cdot)\dd s+h\dd t\big)\,.
\end{align*}
Then we focus our attention on the term involving $P\,\dd$: 
\begin{align*}
P\,\dd\big((\pi^\ast\psi)\,h\dd t\big) & 
= P\,\big((\pi^\ast\dd\psi)\,h\dd t+(-1)^{k-1}(\pi^\ast\psi)\,\dd x^i\,\partial_ih\dd t\big)\\
& =(\pi^\ast\dd\psi)\,\tint_0^\cdot h(s,\cdot)\dd s
+(-1)^{k-1}(\pi^\ast\psi)\,\dd x^i\,\tint_0^\cdot\partial_ih(s,\cdot)\dd s\,.
\end{align*}
Subtracting the $P\,\dd$-term from the $\dd\,P$-term, one completes the proof: 
\begin{align*}
(\dd\,P-P\,\dd)\big((\pi^\ast\psi)\,h\dd t\big) & =(-1)^{k-1}(\pi^\ast\psi)\,h\dd t\\
& =(-1)^k(\pi^\ast s^\ast-\id_{\fsc^k(M_\Sigma)})\big((\pi^\ast\psi)\,h\dd t\big)\,.
\end{align*}
Note that we exploited the possibility to exchange the integral in time 
and the spatial derivative to obtain the first equality, 
while the second is a consequence of $s^\ast\dd t=0$. 
\end{proof}

The last lemma enables us to prove the main result of this section. 

\begin{theorem}\label{thmSCCohomology}
Let $M$ be a globally hyperbolic spacetime and consider a spacelike Cauchy surface $\Sigma$ for $M$. 
Then $\pi$ and $s$, defined respectively in \eqref{eqProj} and \eqref{eqSect}, 
induce isomorphisms in cohomology: 
\begin{equation*}
\xymatrix{\hscdd^\ast(M)\ar@/^1.5pc/[r]^{s^\ast} & \hcdd^\ast(\Sigma)\ar@/^1.5pc/[l]^{\pi^\ast}}\,.
\end{equation*}
\end{theorem}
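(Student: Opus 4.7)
The plan is to reduce the theorem to the product case $M_\Sigma = \bbR\times\Sigma$ and then invoke the machinery developed earlier in the section. Using Theorem \ref{thmGlobHyp}, I may fix an isometry $M \simeq M_\Sigma$, which is a diffeomorphism preserving the classes of spacelike compact supports (since isometries of globally hyperbolic spacetimes preserve the causal structure, hence the notion of spacelike compactness). Consequently pullback along this isometry induces an isomorphism $\hscdd^\ast(M)\simeq\hscdd^\ast(M_\Sigma)$, so it suffices to prove the theorem with $M$ replaced by $M_\Sigma$.

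Next I would verify that $\pi^\ast$ and $s^\ast$ are well-defined chain maps between the complexes $\big(\fsc^\ast(M_\Sigma),\dd\big)$ and $\big(\fc^\ast(\Sigma),\dd\big)$. The support considerations immediately preceding Lemma \ref{lemSCHomotopy} show that $\pi^\ast$ sends compactly supported forms on $\Sigma$ to spacelike compact forms on $M_\Sigma$ and that $s^\ast$ sends spacelike compact forms on $M_\Sigma$ to compactly supported forms on $\Sigma$; commutativity with $\dd$ is a standard property of pullbacks. Hence both maps descend to linear maps in cohomology.

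Then I would put together the two identities. Since $\pi\circ s = \id_\Sigma$, the functoriality of pullback gives $s^\ast\pi^\ast = \id_{\fc^\ast(\Sigma)}$ at the level of forms, and so $s^\ast\circ\pi^\ast = \id_{\hcdd^\ast(\Sigma)}$ in cohomology. For the opposite composition, Lemma \ref{lemSCHomotopy} exhibits the operator $P:\fsc^k(M_\Sigma)\to\fsc^{k-1}(M_\Sigma)$ as a chain homotopy between $\pi^\ast s^\ast$ and $\id_{\fsc^\ast(M_\Sigma)}$. A standard homological algebra argument shows that chain-homotopic chain maps induce the same map in cohomology; therefore $\pi^\ast\circ s^\ast = \id_{\hscdd^\ast(M_\Sigma)}$.

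There is essentially no obstacle at this stage, as all of the substantive work — namely the construction of $P$ and the verification that it preserves spacelike compactness — is already encoded in Lemma \ref{lemSCHomotopy} and the discussion preceding it. The only point that requires a brief explicit remark is the first step, i.e.\ that the isometry from $M$ to some $M_\Sigma$ furnished by Theorem \ref{thmGlobHyp} induces an isomorphism on spacelike compact cohomology; this reduces to observing that an isometry of globally hyperbolic spacetimes maps spacelike compact subsets to spacelike compact subsets, so that pullback restricts to an isomorphism of the relevant cochain complexes.
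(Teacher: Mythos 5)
Your proposal is correct and follows essentially the same route as the paper's own proof: reduce to $M_\Sigma$ via the isometry of Theorem \ref{thmGlobHyp}, use $s^\ast\pi^\ast=\id$ on forms, and invoke Lemma \ref{lemSCHomotopy} so that $\pi^\ast s^\ast$ induces the identity in cohomology. Your additional remarks on support preservation under the isometry and under the pullbacks are consistent with the discussion preceding the lemma in the paper.
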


\begin{proof}
Theorem\ \ref{thmGlobHyp} entails that $M$ is isometric to the globally hyperbolic spacetime $M_\Sigma$ 
obtained endowing $\bbR\times\Sigma$ with suitable metric, orientation and time-orientation, 
hence $\hscdd^\ast(M)\simeq\hscdd^\ast(M_\Sigma)$. 
On a side we have $s^\ast\pi^\ast=\id_{\fc^k(\Sigma)}$. 
On the other side Lemma\ \ref{lemSCHomotopy} provides the identity 
$\pi^\ast s^\ast-\id_{\fsc^k(M_\Sigma)}=(-1)^k(\dd\,P-P\,\dd)$ on $\fsc^k(M_\Sigma)$. 
Since the term on the right hand side maps $\fscdd^k(M_\Sigma)$ to $\dd\fsc^{k-1}(M_\Sigma)$, 
$\pi^\ast s^\ast$ induces the identity in cohomology, thus concluding the proof. 
\end{proof}

\begin{example}\label{exaSCCohomology}
There are several physically relevant examples of globally hyperbolic spacetimes 
with non trivial spacelike compact cohomology groups. This fact is made evident 
by Theorem\ \ref{thmSCCohomology}, which enables us to compute spacelike compact cohomologies 
just looking at the cohomology groups with compact support of a spacelike Cauchy surface. 
Some examples in dimension $m=4$ are:
\begin{description}
\item[Einstein's static universe] Any spacelike Cauchy surface is diffeomorphic to $\bbS^3$, 
therefore $\hscdd^\ast(M)\simeq(\bbR,0,0,\bbR,0)$; 
\item[Schwarzschild spacetime] Any spacelike Cauchy surface is diffeomorphic to $\bbR\times\bbS^2$, 
therefore $\hscdd^\ast(M)\simeq(0,\bbR,0,\bbR,0)$; 
\item[Gowdy's $\bbT^3$ spacetime \cite{Gow74}] Any spacelike Cauchy surface is diffeomorphic to $\bbT^3$, 
therefore $\hscdd^\ast(M)\simeq(\bbR,\bbR^3,\bbR^3,\bbR,0)$; 
\end{description}
where $\bbS^n$ denotes the $n$-sphere and $\bbT^n$ denotes the $n$-torus. 
\end{example}

\section{Timelike compact cohomology}\label{secTCCohomology}
In this section we consider cohomology with timelike compact support on a globally hyperbolic spacetime 
and we show that it is isomorphic to the de Rham cohomology of a spacelike Cauchy surface 
(with degree lowered by $1$). 

Similarly to Section\ \ref{secSCCohomology}, we define the space $\ftc^k(M)$ 
of $k$-forms with timelike compact support on a globally hyperbolic spacetime $M$. 
The usual observation that both $\dd$ and $\de$ preserve supports entails the existence of complexes 
similar to \eqref{eqddComplex} and \eqref{eqdeComplex}, but restricted to forms with timelike compact support. 
This suggests the definition of timelike compact cohomology groups. 

\begin{definition}
Let $M$ be a globally hyperbolic spacetime. We define {\em timelike compact cohomology groups} according to 
\begin{equation*}
\htcdd^k(M)=\frac{\ftcdd^k(M)}{\dd\ftc^{k-1}(M)}\,,\quad\htcde^k(M)=\frac{\ftcde^k(M)}{\de\ftc^{k+1}(M)}\,,
\end{equation*}
where $\ftcdd^k(M)=\ker\big(\dd:\ftc^k(M)\to\ftc^{k+1}(M)\big)$ 
and $\ftcde^k(M)=\ker\big(\de:\ftc^k(M)\to\ftc^{k-1}(M)\big)$. 
\end{definition}

As always, $\htcdd^k(M)$ and $\htcde^{m-k}(M)$ are isomorphic via the Hodge star operator $\ast$. 

The strategy to prove the isomorphism with de Rham cohomology of a spacelike Cauchy surface
(with degree lowered by $1$) is the following: 
\begin{enumerate}
\item[1.] As in the previous section, for each globally hyperbolic spacetime $M$, 
we exploit Theorem\ \ref{thmGlobHyp} to show that it is isometric to a globally hyperbolic spacetime $M_\Sigma$, 
whose underlying manifold is $\bbR\times\Sigma$, $\Sigma$ being a spacelike Cauchy surface for $M$. 
Therefore one gets $\htcdd^k(M)\simeq\htcdd^k(M_\Sigma)$; 
\item[2.] Exploiting the fact that $M_\Sigma$ has $\bbR\times\Sigma$ as underlying manifold, 
we construct an isomorphism $\htcdd^k(M_\Sigma)\simeq\hdd^{k-1}(\Sigma)$. 
\end{enumerate}

\begin{remark}
It is understood that, for $k=0$, the isomorphism in item 2 means that $\htcdd^0(\bbR\times\Sigma)$ is trivial. 
This can be seen directly by noting that a constant function with timelike compact support must vanish everywhere. 
\end{remark}

Take a globally hyperbolic spacetime $M$ and consider a foliation $M_\Sigma$ 
provided by Theorem\ \ref{thmGlobHyp}. In particular, note that the manifold underlying $M_\Sigma$ 
is the Cartesian product $\bbR\times\Sigma$, $\Sigma$ being a spacelike Cauchy surface for $M$.  
Again, we consider the projections $\pi$ and $t$ defined in \eqref{eqProj}. 

The maps giving rise to isomorphisms in cohomology are defined similarly to \cite[Section\ I.6, pp.\ 61--63]{BT82}, 
the only difference being a stricter constraint on the support of forms (timelike compact implies 
vertical compact as defined in \cite{BT82}, but the converse does not hold). 
Therefore, we only have to check that the relevant maps are well-behaved 
with respect to timelike compact supports. For the sake of completeness, 
we recall below the main argument of \cite{BT82} adapted to the present setting. 

We start from the observation that all $k$-forms with timelike compact support on $M_\Sigma$ 
are linear combinations of two types of forms:\footnote{As in the previous section, 
we will often omit the wedge product in order to make the notation more readable.} 
\begin{align}
(\pi^\ast\phi)\,f\,, & \quad\phi\in\f^k(\Sigma),\,f\in\ctc(M_\Sigma)\,,\tag{$1_\mathrm{tc}$}\label{eqType1TC}\\
(\pi^\ast\psi)\,h\dd t\,, & 
\quad\psi\in\f^{k-1}(\Sigma),\,h\in\ctc(M_\Sigma)\tag{$2_\mathrm{tc}$}\label{eqType2TC}\,.
\end{align}
 
Since for each compact subset $K$ of $\Sigma$, 
$\pi^{-1}(K)$ has compact intersection with a timelike compact region $T$ of $M_\Sigma$ 
(the intersection being included in the compact set $J_{M_\Sigma}(K)\cap T$), 
we can introduce the time-integration map: 
\begin{align}\label{eqTimeInt}
i:\ftc^k(M_\Sigma) & \to\f^{k-1}(\Sigma)\,,\\
(\pi^\ast\phi)\,f & \mapsto0\,,\nonumber\\
(\pi^\ast\psi)\,h\dd t & \mapsto\psi\,\tint_\bbR h(s,\cdot)\dd s\,.\nonumber
\end{align}
In order to show that $i$ descends to cohomologies, 
one checks that it is a morphisms between the complexes $\big(\ftc^\ast(M_\Sigma),\dd\big)$ 
and $\big(\f^{\ast-1}(\Sigma),\dd\big)$. This result is shown in the following lemma. 

\begin{lemma}
$\dd\,i=i\,\dd$ on $\ftc^k(M_\Sigma)$ for each $k\in\{0,\dots,m\}$. 
\end{lemma}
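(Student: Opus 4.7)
The plan is to verify the identity on the two generating classes of forms \eqref{eqType1TC} and \eqref{eqType2TC}, into which every element of $\ftc^k(M_\Sigma)$ decomposes as a linear combination; by linearity of $\dd$ and $i$ this suffices. The computation mirrors the one performed in \cite[Section\ I.6, pp.\ 61--63]{BT82}, and the only delicate point is to confirm that the boundary contributions produced by integration along the time fibre genuinely vanish in the timelike compact setting rather than in the vertically compact one considered there.

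The key preliminary observation is that for each $x \in \Sigma$ the timelike curve $s \mapsto (s,x)$ lies inside $J_{M_\Sigma}(\{(0,x)\})$, so for any $f \in \ctc(M_\Sigma)$ the restriction $s \mapsto f(s,x)$ has compact support in $\bbR$ (being contained in the intersection of $\supp(f)$ with $J_{M_\Sigma}(\{(0,x)\})$, itself a union of two sets whose intersection with $\supp(f)$ is compact by past/future compactness). This single fact delivers two ingredients I will need: $\int_\bbR \partial_s f(s,x)\,\dd s = 0$ by the fundamental theorem of calculus, and the legitimacy of differentiation under the integral sign with respect to the spatial coordinates.

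Next, working in local coordinates $(t,x^1,\dots,x^{m-1})$ on $M_\Sigma = \bbR \times \Sigma$ adapted to the splitting, I split $\dd$ into a time-derivative part $\partial_t(\cdot)\,\dd t$ and a spatial part $\partial_i(\cdot)\,\dd x^i$. For a form of type \eqref{eqType1TC}, $\omega = (\pi^\ast\phi)\,f$, one has $i\,\omega = 0$ trivially; expanding $\dd\omega$ produces two type-\eqref{eqType1TC} pieces, namely $(\pi^\ast \dd\phi)\,f$ and $(-1)^k \pi^\ast(\phi \wedge \dd x^i)\,\partial_i f$, both killed by $i$, together with a single type-\eqref{eqType2TC} piece $(-1)^k (\pi^\ast\phi)\,\partial_t f\,\dd t$, whose image under $i$ is $(-1)^k \phi \int_\bbR \partial_s f(s,\cdot)\,\dd s = 0$ by the fibre-compactness remark. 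For a form of type \eqref{eqType2TC}, $\omega = (\pi^\ast\psi)\,h\,\dd t$, the relation $\dd t \wedge \dd t = 0$ reduces $\dd\omega$ to a sum of two type-\eqref{eqType2TC} pieces, and applying $i$ produces exactly the expression one obtains by differentiating the function $H(x) := \int_\bbR h(s,x)\,\dd s$ occurring in $i\,\omega = \psi\,H$, thanks to differentiation under the integral sign.

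There is no substantial obstacle beyond bookkeeping of signs and coordinate indices; all the conceptual content is concentrated in the fibre-compactness observation, which plays for timelike compact supports the role that vertical compactness plays in \cite{BT82} and guarantees that every integration by parts and every derivative exchange goes through exactly as in the classical case.
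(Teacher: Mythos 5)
Your proof is correct and follows essentially the same route as the paper's: decompose into the two generating types \eqref{eqType1TC} and \eqref{eqType2TC}, observe that timelike compactness of the support forces each fibre restriction $s\mapsto f(s,x)$ to have compact support in $\bbR$, and use this both to kill the boundary term $\tint_\bbR\partial_sf(s,\cdot)\,\dd s$ and to exchange $\partial_i$ with $\tint_\bbR$. The one imprecision is that legitimising the exchange of $\partial_i$ and $\tint_\bbR$ requires the fibre supports to lie in a common compact interval for $x$ ranging over a compact neighbourhood, not merely for each fixed $x$; this follows by running your own causal argument with $J_{M_\Sigma}(\{0\}\times K)$ for a compact neighbourhood $K\subseteq\Sigma$ in place of $J_{M_\Sigma}(\{(0,x)\})$, which is exactly the form (compactness of $\pi^{-1}(K)\cap\supp(f)$) in which the paper states the support property.
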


\begin{proof}
The proof is a straightforward calculation performed choosing an oriented atlas for $\Sigma$ 
and extending it to an atlas for $M_\Sigma$. 
One has only to take into account the possibility to exchange spatial derivatives and integrals in time, 
which follows from $\pi^{-1}(K)$, $K\subseteq\Sigma$ compact, 
having compact intersection with each timelike compact region of $M_\Sigma$. 

First we consider $k$-forms of type \eqref{eqType1TC}: 
\begin{align*}
i\,\dd\big((\pi^\ast\phi)\,f\big) & =i\big((\pi^\ast\dd\phi)\,f+(-1^k)(\pi^\ast\phi)\,\dd f\big)\\
& =(-1^k)\phi\,\tint_\bbR\partial_sf(s,\cdot)\dd s=0=\dd\,i\big((\pi^\ast\phi)\,f\big)\,,
\end{align*}
where $\tint_\bbR\partial_sf(s,x)\dd s=0$ for each $x\in\Sigma$ 
due to the fact that $s\mapsto f(s,x)$ has compact support in $\bbR$. 

To complete the proof, we consider $k$-forms of type \eqref{eqType2TC}: 
\begin{align*}
\dd\,i\big((\pi^\ast\psi)\,h\dd t\big) & =\dd\big(\psi\,\tint_\bbR h(s,\cdot)\dd s\big)\\
& =\dd\psi\,\tint_\bbR h(s,\cdot)\dd s+(-1)^{k-1}\psi\,\dd x^i\,\partial_i\tint_\bbR h(s,\cdot)\dd s\,,\\
i\,\dd\big((\pi^\ast\psi)\,h\dd t\big) & 
=i\,\big((\pi^\ast\dd\psi)\,h\dd t+(-1)^{k-1}(\pi^\ast\psi)\,\dd x^i\,\partial_ih\dd t\big)\\
& =\dd\psi\,\tint_\bbR h(s,\cdot)\dd s+(-1)^{k-1}\psi\,\dd x^i\,\tint_\bbR\partial_ih(s,\cdot)\dd s\,.
\end{align*}
As noted above, it is possible to reverse the order 
in which the integral in time and the spatial derivative are performed, therefore the thesis follows. 
\end{proof}

$(k-1)$-forms on $\Sigma$ are extended to timelike compact $k$-forms on $M_\Sigma$ 
taking the wedge product with a suitable $1$-form: Let $a\in\cc(\bbR)$ be such that $\int_\bbR a(s)\dd s=1$. 
Consider the $1$-form $\omega=(t^\ast a)\dd t$ on $M_\Sigma$ 
and observe that it is closed and timelike compact. That done, define the time-extension map: 
\begin{equation}\label{eqTimeExt}
e:\f^{k-1}(\Sigma)\to\ftc^k(M_\Sigma)\,,\quad\phi\mapsto(\pi^\ast\phi)\wedge\omega\,.
\end{equation}
$e:\f^{\ast-1}(\Sigma)\to\ftc^\ast(M_\Sigma)$ is a morphism of complexes due to the identity 
\begin{equation*}
\dd\,e\,\phi=\dd\big((\pi^\ast\phi)\wedge\omega\big)=(\pi^\ast\dd\phi)\wedge\omega=e\,\dd\,\phi\,,
\end{equation*}
for each $k\in\{1,\dots,m+1\}$ and $\phi\in\f^{k-1}(\Sigma)$. 

From \eqref{eqTimeInt} and \eqref{eqTimeExt}, 
one immidiately realizes that $i\,e=\id_{\f^{\ast-1}(\Sigma)}$, which entails a similar identity in cohomology: 
\begin{equation}\label{eqIntExt}
i\,e\,\phi=i\big((\pi^\ast\phi)\wedge\omega\big)=\phi\tint_\bbR a(s)\dd s=\phi\,,
\end{equation}
for each $k\in\{1,\dots,m+1\}$ and $\phi\in\f^{k-1}(\Sigma)$. 
To conclude that $e$ and $i$ induce the sought isomorphisms in cohomology, 
one has to find a chain homotopy $Q:\ftc^k(M_\Sigma)\to\ftc^{k-1}(M_\Sigma)$ 
between $e\,i$ and $\id_{\ftc^\ast(M_\Sigma)}$. 

We begin noting the following fact, which is the key point to show that the homotopy operator 
defined in \cite[Section\ I.4, p.\ 38]{BT82} maps timelike compact $k$-forms to $\ftc^{k-1}(M_\Sigma)$: 
Given $f\in\ctc(M_\Sigma)$, the function $\widehat{f}:M_\Sigma\to\bbR$, defined by the formula 
\begin{equation}\label{eqHat}
\widehat{f}(t,x)=\tint_{-\infty}^tf(s,x)\dd s-\tint_\bbR f(r,x)\dd r\,\tint_{-\infty}^ta(s)\dd s\,,
\end{equation}
lies in $\ctc(M_\Sigma)$. In particular, one has to check that the support of $\widehat{f}$ is timelike compact: 
Per construction $a$ is supported in a time slab 
$[c,d]\times\Sigma=J_{M_\Sigma}^+(\Sigma_c)\cap J_{M_\Sigma}^-(\Sigma_d)$ for $c,d\in\bbR$, 
where $\Sigma_c=\{c\}\times\Sigma$ and $\Sigma_d=\{d\}\times\Sigma$. 
According to \cite[Theorem 3.1]{San13}, we also find Cauchy surfaces $\tilde{\Sigma}_\pm$ 
such that $\supp(f)\subseteq J_{M_\Sigma}^+(\tilde{\Sigma}_-)\cap J_{M_\Sigma}^-(\tilde{\Sigma}_+)$. 
As one can realize looking at \eqref{eqHat}, 
$\widehat{f}=0$ in $I_{M_\Sigma}^+(\Sigma_d)\cap I_{M_\Sigma}^+(\tilde{\Sigma}_+)$ 
and in $I_{M_\Sigma}^-(\Sigma_c)\cap I_{M_\Sigma}^-(\tilde{\Sigma}_-)$. 
As a matter of fact, for the time integral of $a$ one has 
\begin{align*}
\tint_{-\infty}^ta(s)\dd s=
\begin{cases}
1\,, & (t,x)\in I_{M_\Sigma}^+(\Sigma_d)\,,\\
0\,, & (t,x)\in I_{M_\Sigma}^-(\Sigma_c)\,,
\end{cases}
\end{align*}
while the time integral of $f$ gives 
\begin{align*}
\tint_{-\infty}^tf(s,x)\dd s=
\begin{cases}
\tint_\bbR f(s,x)\dd s\,, & (t,x)\in I_{M_\Sigma}^+(\tilde{\Sigma}_+)\,,\\
0\,, & (t,x)\in I_{M_\Sigma}^-(\tilde{\Sigma}_-)\,.
\end{cases}
\end{align*}
This entails that $\widehat{f}$ is supported inside the intersection between the causal future of both $\Sigma_c$ 
and $\tilde{\Sigma}_-$ and the causal past of both $\Sigma_d$ and $\tilde{\Sigma}_+$,\footnote{This 
is the complement in $M_\Sigma$ of the union between 
the chronological future of both $\Sigma_d$ and $\tilde{\Sigma}_+$ 
and the chronological past of both $\Sigma_c$ and $\tilde{\Sigma}_-$. 
In fact, each Cauchy surface $\Sigma$ for a globally hyperbolic spacetime $M$ splits it into two disjoint parts 
according to $I_M^\pm(\Sigma)\cup J_M^\mp(\Sigma)=M$. 
This fact easily follows from the definition of Cauchy surface.} 
namely 
\begin{equation*}
\supp(\widehat{f})\subseteq\big(J_{M_\Sigma}^+(\Sigma_c)\cup J_{M_\Sigma}^+(\tilde{\Sigma}_-)\big)
\cap\big(J_{M_\Sigma}^-(\Sigma_d)\cap J_{M_\Sigma}^-(\tilde{\Sigma}_+)\big)\,. 
\end{equation*}
We deduce that $\supp(\widehat{f})$ is timelike compact. 

This enables us to define the linear map below, which provides the sought homotopy operator: 
\begin{align*}
Q:\ftc^k(M_\Sigma) & \to\ftc^{k-1}(M_\Sigma)\,,\\
(\pi^\ast\phi)\,f & \mapsto0\,,\\
(\pi^\ast\psi)\,h\dd t & \mapsto(\pi^\ast\psi)\,\widehat{h}\,,
\end{align*}
where $\widehat{h}$ is defined according to eq.\ \eqref{eqHat}. 

\begin{lemma}\label{lemTCHomotopy}
Let $Q:\ftc^k(M_\Sigma)\to\ftc^{k-1}(M_\Sigma)$ be defined as above. 
Then $Q$ is a chain homotopy between $e\,i$ and $\id_{\ftc^\ast(M_\Sigma)}$, that is to say 
$e\,i-\id_{\ftc^\ast(M_\Sigma)}=(-1)^k(\dd\,Q-Q\,\dd)$ on $\ftc^k(M_\Sigma)$ for each $k\in\{0,\dots,m\}$. 
\end{lemma}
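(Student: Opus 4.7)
The proof will proceed by direct computation on the two basic types of forms (\ref{eqType1TC}) and (\ref{eqType2TC}), exactly in the spirit of Lemma~\ref{lemSCHomotopy}. The key algebraic input is the following identity for the operator $f\mapsto\widehat{f}$ introduced in~\eqref{eqHat}: using that $f\in\ctc(M_\Sigma)$ vanishes outside a timelike compact set, one has
\[
\partial_t\widehat{f}(t,x)=f(t,x)-a(t)\tint_\bbR f(r,x)\dd r\,,\qquad
\partial_i\widehat{f}(t,x)=\widehat{\partial_if}(t,x)\,,
\]
where the second equality relies on being able to exchange the spatial derivative with the time integral, which is justified (as in the preceding lemma) by the fact that $\pi^{-1}(K)$, for $K\subseteq\Sigma$ compact, meets any timelike compact region in a compact set.

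First I handle a $k$-form of type (\ref{eqType1TC}), namely $(\pi^\ast\phi)\,f$. Here $Q$ kills it and $i$ kills it, so the left-hand side of the claimed identity equals $-(\pi^\ast\phi)f$. On the right-hand side, $\dd Q$ contributes nothing, while
\[
Q\,\dd\big((\pi^\ast\phi)\,f\big)=Q\big((\pi^\ast\dd\phi)\,f+(-1)^k(\pi^\ast\phi)\,\partial_if\,\dd x^i+(-1)^k(\pi^\ast\phi)\,\partial_tf\,\dd t\big)\,,
\]
and only the last summand, of type (\ref{eqType2TC}), survives $Q$, giving $(-1)^k(\pi^\ast\phi)\widehat{\partial_tf}=(-1)^k(\pi^\ast\phi)f$ by the boundary behavior of $f$. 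Multiplying by $(-1)^k$ and reversing the sign from $Q\,\dd$ reproduces $-(\pi^\ast\phi)f$, matching $(ei-\id)$.

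Next I treat a $k$-form of type (\ref{eqType2TC}), $(\pi^\ast\psi)\,h\,\dd t$. Here $Q$ produces $(\pi^\ast\psi)\,\widehat{h}$, so
\[
\dd\,Q\big((\pi^\ast\psi)\,h\,\dd t\big)=(\pi^\ast\dd\psi)\,\widehat{h}+(-1)^{k-1}(\pi^\ast\psi)\,\partial_i\widehat{h}\,\dd x^i+(-1)^{k-1}(\pi^\ast\psi)\,\partial_t\widehat{h}\,\dd t\,.
\]
On the other hand, $\dd\big((\pi^\ast\psi)\,h\,\dd t\big)=(\pi^\ast\dd\psi)\,h\,\dd t+(-1)^{k-1}(\pi^\ast\psi)\,\partial_ih\,\dd x^i\wedge\dd t$ is a sum of type (\ref{eqType2TC}) pieces, so applying $Q$ and invoking $\partial_i\widehat{h}=\widehat{\partial_ih}$ gives
\[
Q\,\dd\big((\pi^\ast\psi)\,h\,\dd t\big)=(\pi^\ast\dd\psi)\,\widehat{h}+(-1)^{k-1}(\pi^\ast\psi)\,\widehat{\partial_ih}\,\dd x^i\,.
\]
Subtracting, the first and third terms cancel and the $\partial_t\widehat{h}$ term survives. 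Plugging in the formula for $\partial_t\widehat{h}$ and multiplying by $(-1)^k$ yields
\[
(-1)^k(\dd Q-Q\dd)\big((\pi^\ast\psi)\,h\,\dd t\big)=-(\pi^\ast\psi)\,h\,\dd t+(\pi^\ast\psi)\,\Big(\tint_\bbR h(r,\cdot)\dd r\Big)(t^\ast a)\,\dd t\,,
\]
which is exactly $e\,i\big((\pi^\ast\psi)\,h\,\dd t\big)-(\pi^\ast\psi)\,h\,\dd t$, as desired.

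The only subtlety in the argument is the boundary/integrability input encoded in the formulas for $\partial_t\widehat{h}$ and $\partial_i\widehat{h}$: all three terms obtained by differentiating under the integral must be genuinely defined and exchange is permitted, and this is precisely the point where the timelike compact nature of the support (rather than merely vertical compactness as in~\cite{BT82}) is used. Once this is verified, the rest is bookkeeping of signs on the two form types, so I expect no real obstacle beyond a careful local-coordinate calculation.
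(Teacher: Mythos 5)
Your proposal is correct and follows essentially the same route as the paper: a direct coordinate computation on the two form types, using $\widehat{\partial_t f}=f$ for type $(1_{\mathrm{tc}})$ and the expansion of $\dd\widehat{h}$ (equivalently your formulas for $\partial_t\widehat{h}$ and $\partial_i\widehat{h}=\widehat{\partial_i h}$) for type $(2_{\mathrm{tc}})$, with the exchange of spatial derivatives and time integrals justified by timelike compactness. The only blemish is the phrase \quotes{the first and third terms cancel} in the type-$(2_{\mathrm{tc}})$ case, where you clearly mean the first and second terms cancel against the two terms of $Q\,\dd$; the displayed formulas and conclusion are right.
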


\begin{proof}
As always, the computation is performed fixing an oriented atlas for $\Sigma$ and extending it to $M_\Sigma$. 

We consider first the case of a $k$-form of type \eqref{eqType1TC}: 
\begin{align*}
(\dd\,Q-Q\,\dd)\big((\pi^\ast\phi)f\big) & =-Q\big((\pi^\ast\dd\phi)\,f+(-1)^k(\pi^\ast\phi)\,\dd f\big)\\
& =(-1)^{k+1}(\pi^\ast\phi)\,\widehat{\partial_tf}
=(-1)^k(e\,i-\id_{\ftc^\ast(M_\Sigma)})\big((\pi^\ast\phi)\,f\big)\,,
\end{align*}
where the last equality follows from the definition of $i$, eq.\ \eqref{eqTimeInt}, 
and the identity $\widehat{\partial_tf}=f$, which in turn is a consequence of eq.\ \eqref{eqHat} 
on account of the compact support in $\bbR$ of the function $t\mapsto f(t,x)$ for each fixed $x\in\Sigma$. 

Considering the case of $k$-forms of type \eqref{eqType2TC}, one has the following: 
\begin{align*}
\dd\,Q\big((\pi^\ast\psi)\,h\dd t\big) & =\dd\big((\pi^\ast\psi)\,\widehat{h}\big)
=(\pi^\ast\dd\psi)\,\widehat{h}+(-1)^{k-1}(\pi^\ast\psi)\,\dd\widehat{h}\,,\\
Q\,\dd\big((\pi^\ast\psi)\,h\dd t\big) & 
=Q\big((\pi^\ast\dd\psi)\,h\dd t+(-1)^{k-1}(\pi^\ast\psi)\,\dd x^i\,\partial_ih\dd t\big)\\
& =(\pi^\ast\dd\psi)\,\widehat{h}+(-1)^{k-1}(\pi^\ast\psi)\,\dd x^i\,\widehat{\partial_ih}\,,\\
e\,i\big((\pi^\ast\psi)h\dd t\big) & =e\big(\psi\,\tint_\bbR h(s,\cdot)\dd s\big)
=\big(\pi^\ast\big(\psi\,\tint_\bbR h(s,\cdot)\dd s\big)\big)\,\omega\,.
\end{align*}
From eq.\ \eqref{eqHat} one reads 
$\dd\widehat{h}=h\dd t-\pi^\ast\big(\tint_\bbR h(r,\cdot)\dd r\big)\,\omega+\dd x^i\,\partial_i\widehat{h}$, 
therefore 
\begin{align*}
(\dd\,Q-Q\,\dd)\big((\pi^\ast\psi)\,h\dd t\big) & =(-1)^{k-1}(\pi^\ast\psi)\,h\dd t
-(-1)^{k-1}\big(\pi^\ast\big(\psi\,\tint_\bbR h(r,\cdot)\dd r\big)\big)\,\omega\\
& =(-1)^k\big(e\,i-\id_{\ftc^k(M_\Sigma)}\big)\big((\pi^\ast\psi)\,h\dd t\big)\,,
\end{align*}
concluding the proof. 
\end{proof}

The last lemma leads us to the main result of the present section. 

\begin{theorem}\label{thmTCCohomology}
Let $M$ be a globally hyperbolic spacetime and consider a spacelike Cauchy surface $\Sigma$ for $M$. 
Then $i$ and $e$, defined respectively in \eqref{eqTimeInt} and \eqref{eqTimeExt}, 
induce isomorphisms in cohomology: 
\begin{equation*}
\xymatrix{\htcdd^\ast(M)\ar@/^1.5pc/[r]^{i} & \hdd^{\ast-1}(\Sigma)\ar@/^1.5pc/[l]^{e}}\,.
\end{equation*}
\end{theorem}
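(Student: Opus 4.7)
The plan is to assemble the pieces constructed in the preceding discussion into two mutually inverse isomorphisms. First I would invoke Theorem \ref{thmGlobHyp} to produce an isometry between $M$ and a globally hyperbolic spacetime $M_\Sigma$ whose underlying manifold factors as $\bbR\times\Sigma$, with $\Sigma$ a spacelike Cauchy surface of $M$. Since isometries preserve supports and commute with $\dd$, this induces a canonical identification $\htcdd^\ast(M)\simeq\htcdd^\ast(M_\Sigma)$, reducing the problem to the factorized setting where the explicit formulas for $i$ and $e$ in \eqref{eqTimeInt} and \eqref{eqTimeExt} make sense.

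Next I would verify that both $i$ and $e$ descend to cohomology. That $i$ is a chain map is exactly the content of the lemma proved just before this theorem, namely $\dd\,i=i\,\dd$ on $\ftc^\ast(M_\Sigma)$. That $e$ is a chain map follows immediately from the identity $\dd\,e\,\phi=(\pi^\ast\dd\phi)\wedge\omega=e\,\dd\,\phi$, which was recorded after \eqref{eqTimeExt} and uses only that $\omega$ is closed. Hence $i_\ast:\htcdd^k(M_\Sigma)\to\hdd^{k-1}(\Sigma)$ and $e_\ast:\hdd^{k-1}(\Sigma)\to\htcdd^k(M_\Sigma)$ are well defined linear maps.

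Finally I would check that they are inverse to each other. In one direction, eq.\ \eqref{eqIntExt} already gives $i\,e=\id_{\f^{\ast-1}(\Sigma)}$ at the level of forms, so $i_\ast\,e_\ast=\id$ on $\hdd^{\ast-1}(\Sigma)$ without further work. In the other direction, Lemma \ref{lemTCHomotopy} furnishes a chain homotopy $Q:\ftc^k(M_\Sigma)\to\ftc^{k-1}(M_\Sigma)$, valued in the timelike compact sector (this is the crucial support control, handled via the formula \eqref{eqHat}), satisfying $e\,i-\id=(-1)^k(\dd\,Q-Q\,\dd)$. Applied to a $\dd$-closed timelike compact form, the right-hand side lies in $\dd\ftc^{k-1}(M_\Sigma)$, whence $e_\ast\,i_\ast=\id$ on $\htcdd^k(M_\Sigma)$. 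Combining these two identities yields the stated isomorphism.

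The substantive difficulty is not in the theorem itself but was concentrated in Lemma \ref{lemTCHomotopy}, specifically in showing that the operator defined via \eqref{eqHat} actually maps $\ftc^\ast(M_\Sigma)$ into itself; here the delicate point is that the subtraction of $\tint_\bbR f(r,x)\dd r\cdot\tint_{-\infty}^ta(s)\dd s$ is what kills the support of $\widehat{f}$ in both timelike infinities, turning a naive antiderivative (which would have only past-compact support) into a genuinely timelike compact function. Once that is in hand, the proof of Theorem \ref{thmTCCohomology} is a routine verification of the two-sided inverse relation at the cohomology level.
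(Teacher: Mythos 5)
Your proposal is correct and follows essentially the same route as the paper: reduce to the foliated spacetime $M_\Sigma$ via Theorem \ref{thmGlobHyp}, use $i\,e=\id$ from \eqref{eqIntExt} for one direction, and invoke the chain homotopy of Lemma \ref{lemTCHomotopy} (with the support control coming from \eqref{eqHat}) for the other. Your closing remark correctly identifies where the real work lies, namely in showing that $\widehat{f}$ is timelike compact rather than merely past compact.
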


\begin{proof}
From Theorem\ \ref{thmGlobHyp} one reads that $M$ is isometric to the globally hyperbolic spacetime $M_\Sigma$, 
the underlying manifold being $\bbR\times\Sigma$ endowed with suitable metric, orientation and time-orientation. 
In particular, this entails $\htcdd^\ast(M)\simeq\htcdd^\ast(M_\Sigma)$. 
Similarly to the proof of Theorem\ \ref{thmSCCohomology}, 
eq.\ \eqref{eqIntExt} and Lemma\ \ref{lemTCHomotopy} show 
that $i$ and $e$ give rise to the sought isomorphisms in cohomology, thus completing the proof. 
\end{proof}

\begin{example}
As in Example\ \ref{exaSCCohomology}, one can exploit Theorem\ \ref{thmTCCohomology} 
to find examples of globally hyperbolic spacetimes with non-trivial timelike compact cohomology groups. 
Considering the same spacetimes of the example mentioned above, one has:
\begin{description}
\item[Einstein's static universe] Any spacelike Cauchy surface is diffeomorphic to $\bbS^3$, 
therefore $\htcdd^\ast(M)\simeq(0,\bbR,0,0,\bbR)$; 
\item[Schwarzschild spacetime] Any spacelike Cauchy surface is diffeomorphic to $\bbR\times\bbS^2$, 
therefore $\htcdd^\ast(M)\simeq(0,\bbR,0,\bbR,0)$; 
\item[Gowdy's $\bbT^3$ spacetime \cite{Gow74}] Any spacelike Cauchy surface is diffeomorphic to $\bbT^3$, 
therefore $\htcdd^\ast(M)\simeq(0,\bbR,\bbR^3,\bbR^3,\bbR)$. 
\end{description}
\end{example}

\section{Poincar\'e duality between $\hscdd^\ast$ and $\htcdd^{m-\ast}$}\label{secSCTCDuality}
The aim of this section is to extend the usual Poincar\'e duality 
between de Rham cohomology and its counterpart with compact support, see\ Theorem\ \ref{thmPoincareDuality}, 
to the case of cohomologies with spacelike compact and respectively timelike compact support. 

Given a globally hyperbolic spacetime $M$, 
we observe that eq.\ \eqref{eqPairing1} and eq.\ \eqref{eqPairing2} provide pairings 
between spacelike compact and timelike compact forms since by definition spacelike compact regions 
intersect timelike compact ones inside a compact set, see the end of Section\ \ref{secLorGeom}. 
Non-degeneracy in the standard case carries over to the present situation 
since compact forms are both spacelike compact and timelike compact. 

Similarly to the case of \eqref{eqPoincarePairing}, 
one can exploit Theorem\ \ref{thmStokes} to show that $\la\cdot,\cdot\ra$ and $(\cdot,\cdot)$ descend 
to cohomologies with spacelike and timelike compact support. 
Consider for example $\alpha\in\fsc^{k-1}(M)$ and $\beta\in\ftcdd^{m-k}(M)$. 
Then, on account of Theorem\ \ref{thmStokes} and the compact support of $\alpha\wedge\beta$, one has 
\begin{equation*}
\la\dd\alpha,\beta\ra=\tint_M\dd(\alpha\wedge\beta)=0\,.
\end{equation*}
To sum up, we have to show non-degeneracy for the pairings listed below: 
\begin{subequations}\label{eqSCTCPairing}
\begin{align}
\la\cdot,\cdot\ra & :\hscdd^k(M)\times\htcdd^{m-k}(M)\to\bbR\,,\\
{}_\de(\cdot,\cdot) & :\hscde^k(M)\times\htcdd^k(M)\to\bbR\,,\\
(\cdot,\cdot)_\de & :\hscdd^k(M)\times\htcde^k(M)\to\bbR\,.
\end{align}
\end{subequations}
Actually, it is enough to prove non-degeneracy for one of the pairings, 
the others being related via Hodge star $\ast$, see the comment after Theorem\ \ref{thmPoincareDuality}. 

\begin{lemma}\label{lemEquivalence}
Let $M$ be a globally hyperbolic spacetime and consider a spacelike Cauchy surface $\Sigma$ for $M$. 
Denote the isomorphisms provided by Theorem\ \ref{thmSCCohomology} and Theorem\ \ref{thmTCCohomology} 
with $\pi^\ast:\hcdd^\ast(\Sigma)\to\hscdd^\ast(M)$ and $e:\hdd^{m-1-\ast}(\Sigma)\to\htcdd^{m-\ast}(M)$. 
Then $\big\la\pi^\ast[\phi],e\,[\psi]\big\ra=\big\la[\phi],[\psi]\big\ra$ 
for each $k\in\{0,\dots,m-1\}$, $[\phi]\in\hcdd^k(\Sigma)$ and $[\psi]\in\fdd^{m-1-k}(M)$. 
where \eqref{eqSCTCPairing} gives the pairing on the left-hand-side, 
while the one on the right-hand-side is given by \eqref{eqPoincarePairing} for the oriented manifold $\Sigma$. 
\end{lemma}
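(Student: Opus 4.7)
The plan is to reduce the identity to a direct computation using Fubini, after replacing $M$ by the splitting $M_\Sigma = \bbR\times\Sigma$ from Theorem \ref{thmGlobHyp}. Since the pairing $\la\cdot,\cdot\ra$ only depends on integrals of top forms and both $\pi^\ast$ and $e$ are defined at the level of representatives, I will work with a chosen representative $\phi\in\fcdd^k(\Sigma)$ of $[\phi]$ and $\psi\in\fdd^{m-1-k}(\Sigma)$ of $[\psi]$ and verify the equality on the nose, after which passage to cohomology classes is automatic.

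First I would unfold the left-hand side. By definition $e[\psi]=[(\pi^\ast\psi)\wedge\omega]$ with $\omega=(t^\ast a)\dd t$ and $\tint_\bbR a(s)\,\dd s=1$, so
\begin{equation*}
\bigl\la\pi^\ast[\phi],e\,[\psi]\bigr\ra=\tint_{M_\Sigma}(\pi^\ast\phi)\wedge(\pi^\ast\psi)\wedge\omega
=\tint_{M_\Sigma}\pi^\ast(\phi\wedge\psi)\wedge(t^\ast a)\dd t\,.
\end{equation*}
The wedge $\phi\wedge\psi$ is a top form on $\Sigma$ with compact support (the support of $\phi$ is compact), so the integrand on the right is compactly supported in the $\Sigma$-directions and compactly supported in the $\bbR$-direction thanks to $a\in\cc(\bbR)$; Fubini therefore applies.

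Next I would perform the integration along the $\bbR$-factor. With the orientation convention inherited from Theorem \ref{thmGlobHyp} (so that a positive frame on $M_\Sigma$ is obtained by prepending $\dd t$ to a positive frame on $\Sigma$), the integration-along-the-fiber formula gives
\begin{equation*}
\tint_{M_\Sigma}\pi^\ast(\phi\wedge\psi)\wedge(t^\ast a)\dd t
=\Bigl(\tint_\bbR a(s)\,\dd s\Bigr)\tint_\Sigma\phi\wedge\psi
=\tint_\Sigma\phi\wedge\psi=\bigl\la[\phi],[\psi]\bigr\ra\,,
\end{equation*}
which is precisely the right-hand side of the claimed identity. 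Should the orientation convention introduce a sign in the fiber integration, one simply redefines $\pi^\ast$ or $e$ by the same sign so that the composition $\pi^\ast s^\ast$ and $i\,e$ in Sections \ref{secSCCohomology} and \ref{secTCCohomology} remain unaffected; the resulting maps are still the isomorphisms produced by Theorems \ref{thmSCCohomology} and \ref{thmTCCohomology}.

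The main obstacle I anticipate is purely bookkeeping: keeping track of the orientation on $\Sigma$ induced from that of $M$ via the time-orientation, and of the sign produced when commuting the one-form $\omega$ past $\pi^\ast\psi$ (a factor of $(-1)^{m-1-k}$). These signs cancel against each other or can be absorbed into the orientation convention for $\Sigma$, but writing the Fubini step rigorously does require invoking \cite[Section\ I.6]{BT82} or an analogous integration-along-fibers result for the trivial fibration $t:M_\Sigma\to\bbR$ and checking that the support conditions (compact in $\Sigma$ for $\phi$, compact in $\bbR$ for $a$) justify the splitting of the integral. No deeper cohomological input is needed, since the identity is proved at the level of forms and then descends.
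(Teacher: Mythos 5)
Your proposal is correct and follows essentially the same route as the paper: unfold $e[\psi]=[(\pi^\ast\psi)\wedge\omega]$, integrate over $M_\Sigma=\bbR\times\Sigma$ by splitting off the $\bbR$-factor, and use $\tint_\bbR a(s)\dd s=1$. The paper carries this out in a single displayed computation without dwelling on the orientation and sign bookkeeping you flag, which is a reasonable extra precaution but does not change the argument.
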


\begin{proof}
For $k\in\{0,\dots,m-1\}$, take $[\phi]\in\hcdd^k(\Sigma)$ and $[\psi]\in\fdd^{m-k-1}(M)$. 
Recalling \eqref{eqProj} and \eqref{eqTimeExt}, 
one can explicitly compute $\big\la\pi^\ast[\phi],e\,[\psi]\big\ra$: 
\begin{equation*}
\big\la\pi^\ast[\phi],e[\psi]\big\ra=\tint_{M_\Sigma}(\pi^\ast\phi)\wedge\big((\pi^\ast\psi)\wedge\omega\big)
=\big(\int_\Sigma\phi\wedge\psi\big)\tint_\bbR a(s)\dd s=\big\la[\phi],[\psi]\big\ra\,,
\end{equation*}
$M_\Sigma$ being the foliation of $M$ induced by the spacelike Cauchy surface $\Sigma$ 
according to Theorem\ \ref{thmGlobHyp}. 
Note that in the last step we exploited the identity $\tint_\bbR a(s)\dd s=1$. 
\end{proof}

\begin{theorem}\label{thmSCTCPoincareDuality}
Let $M$ be a globally hyperbolic spacetime which admits a finite good cover. 
Then the pairing $\la\cdot,\cdot\ra$ between $\hscdd^k(M)$ and $\htcdd^{m-k}(M)$ is non-degenerate. 
Therefore the same holds true for ${}_\de(\cdot,\cdot)$ between $\hscde^k(M)$ and $\htcdd^k(M)$ 
and $(\cdot,\cdot)_\de$ between $\hscdd^k(M)$ and $\htcde^k(M)$ as well. 
\end{theorem}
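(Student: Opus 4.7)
The plan is to reduce all three pairings in \eqref{eqSCTCPairing} to the first one via the Hodge star, then transfer the resulting pairing from $M$ to the standard Poincaré pairing on a spacelike Cauchy surface $\Sigma$ using Lemma~\ref{lemEquivalence}, and finally invoke Theorem~\ref{thmPoincareDuality} on $\Sigma$.

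First, I would observe that the Hodge star yields isomorphisms $\hscdd^k(M)\simeq\hscde^{m-k}(M)$ and $\htcdd^k(M)\simeq\htcde^{m-k}(M)$, and the identities relating $\la\cdot,\cdot\ra$, ${}_\de(\cdot,\cdot)$ and $(\cdot,\cdot)_\de$ listed after Theorem~\ref{thmPoincareDuality} go through verbatim in the spacelike/timelike compact setting (both sides make sense because spacelike compact and timelike compact supports always intersect compactly, as noted at the start of the section). Hence non-degeneracy of one of the three pairings implies non-degeneracy of the other two, and I only need to address $\la\cdot,\cdot\ra:\hscdd^k(M)\times\htcdd^{m-k}(M)\to\bbR$.

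Second, by Theorem~\ref{thmGlobHyp} I may work on $M_\Sigma=\bbR\times\Sigma$, where Theorem~\ref{thmSCCohomology} gives an isomorphism $\pi^\ast:\hcdd^k(\Sigma)\to\hscdd^k(M_\Sigma)$ and Theorem~\ref{thmTCCohomology} gives an isomorphism $e:\hdd^{m-1-k}(\Sigma)\to\htcdd^{m-k}(M_\Sigma)$. Lemma~\ref{lemEquivalence} tells me that these isomorphisms intertwine the spacelike-compact/timelike-compact pairing on $M$ with the standard Poincar\'e pairing on $\Sigma$, via the identity $\la\pi^\ast[\phi],e[\psi]\ra=\la[\phi],[\psi]\ra$. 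Therefore non-degeneracy on $M$ is equivalent to non-degeneracy of the pairing $\hcdd^k(\Sigma)\times\hdd^{m-1-k}(\Sigma)\to\bbR$, which is the conclusion of Theorem~\ref{thmPoincareDuality} applied to the oriented $(m-1)$-dimensional manifold $\Sigma$.

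The main obstacle I anticipate is verifying the hypothesis of Theorem~\ref{thmPoincareDuality} on $\Sigma$, namely that $\Sigma$ admits a finite good cover. The assumption is placed on $M$ instead, so I need a transfer argument. The natural route is to use the product structure $M_\Sigma\cong\bbR\times\Sigma$: one can refine a finite good cover $\{U_i\}$ of $M_\Sigma$ so that each $U_i$ is of the form $I_i\times V_i$ with $I_i\subseteq\bbR$ an open interval and $V_i\subseteq\Sigma$ an open set diffeomorphic to $\bbR^{m-1}$, in which case $\{V_i\}$ is a finite good cover of $\Sigma$. Equivalently, since $\Sigma$ is a deformation retract of $M_\Sigma$ and has the same de Rham cohomology, one can bypass a direct construction and simply combine the always-valid non-degeneracy in one argument from Remark~\ref{remPoincareDualityImproved} with the finite-dimensionality of $\hdd^\ast(\Sigma)\simeq\hdd^\ast(M)$ (inherited from $M$) to upgrade to non-degeneracy in both arguments by a standard duality of finite-dimensional vector spaces. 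Once this step is in place, combining it with the previous two reductions completes the proof of non-degeneracy of $\la\cdot,\cdot\ra$, and the Hodge-star argument from the first step transports the result to ${}_\de(\cdot,\cdot)$ and $(\cdot,\cdot)_\de$.
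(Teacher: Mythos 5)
Your proposal is correct and follows essentially the same route as the paper: reduce to the single pairing $\la\cdot,\cdot\ra$ via the Hodge star, transfer it to the pairing between $\hcdd^k(\Sigma)$ and $\hdd^{m-1-k}(\Sigma)$ via Lemma~\ref{lemEquivalence}, and conclude by Theorem~\ref{thmPoincareDuality} applied to $\Sigma$. The one point where you go beyond the paper is in justifying that the finite-good-cover hypothesis passes from $M$ to $\Sigma$, a step the paper merely asserts; of your two suggested justifications, the cover-refinement argument is shaky (refining a finite good cover of $\bbR\times\Sigma$ into finitely many product-form good sets is not obviously possible), but the alternative via Remark~\ref{remPoincareDualityImproved} combined with the finite-dimensionality of $\hdd^\ast(\Sigma)\simeq\hdd^\ast(M)$ is sound and genuinely closes that gap.
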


\begin{proof}
Consider a spacelike Cauchy surface $\Sigma$ of $M$. 
Lemma\ \ref{lemEquivalence} entails that the pairing between $\hscdd^k(M)$ and $\htcdd^{m-k}(M)$ 
is equivalent to the one between $\hcdd^k(\Sigma)$ and $\hdd^{m-k-1}(\Sigma)$. 
Since $M$ admits a finite good cover, the same holds true for $\Sigma$. 
Therefore non-degeneracy of the pairing between $\hscdd^k(M)$ and $\htcdd^{m-k}(M)$ 
follows from non-degeneracy of the pairing between $\hcdd^k(\Sigma)$ and $\hdd^{m-k-1}(\Sigma)$, 
which holds true on account of Theorem\ \ref{thmPoincareDuality}. 
\end{proof}

\begin{remark}
On account of Remark \ref{remPoincareDualityImproved}, one can obtain a positive result under weaker conditions. 
Even if there is no finite good cover for $M$, one gets an isomorphism 
$\htcdd^{m-k}(M)\to\big(\hscdd^k(M)\big)^\ast$ defined by $[\beta]\mapsto\la\cdot,[\beta]\ra$. 
Similar conclusions hold true also for the other pairings considered in \eqref{eqSCTCPairing}. 
\end{remark}

\section{Classical observables for $k$-form Maxwell fields}\label{secObservables}
In this section we briefly discuss the classical field theory of electromagnetism 
without external sources and its analogs in a different degree. 
We adopt two different viewpoints. In the first place we take the perspective of the vector potential, 
while in the second place we consider the Faraday tensor as the central object. 
The aim is to exhibit models where our knowledge about cohomologies 
with spacelike compact and timelike compact support can be fruitfully exploited in order to better understand 
which space of observables is optimal for the classical field theoretical model under analysis. 
The choice not to include external sources is motivated by the fact that 
the relevant features for the present discussion appear in relation to homogeneous field equation. 
One might take into account sources as well, 
thus dealing with inhomogeneous field equations, see e.g.\ \cite{BDS13a}.

\subsection{The Laplace-de Rham differential operator}
Before discussing the specific models mentioned above, 
it is convenient to recall some well-known properties of the {\em Laplace-de Rham operator} 
$\Box=\de\dd+\dd\de:\f^k(M)\to\f^k(M)$ on a globally hyperbolic spacetime $M$. 
We stress that any value of $k\in\{0,\dots,m\}$ is allowed. 
With a slight abuse of notation, we denote the Laplace-de Rham operator 
with the same symbol regardless of the degree $k$. 

As one can check directly, $\Box$ is {\em normally hyperbolic} and {\em formally self-adjoint}, 
meaning that it is a second order linear differential operator whose principal symbol is given by the metric, 
see\ e.g.\ \cite[Section\ 1.5]{BGP07}, and such that $(\Box\alpha,\beta)=(\alpha,\Box\beta)$ 
for each $\alpha,\beta\in\f^k(M)$ whose supports have compact intersection, 
where $(\cdot,\cdot)$ denotes the pairing defined in \eqref{eqPairing2}. 
This fact easily follows from $\de$ being the formal adjoint of $\dd$, 
see\ Theorem\ \ref{thmStokes} and the following discussion. 

Since $\Box$ is normally hyperbolic and formally self-adjoint, 
there are unique {\em advanced and retarded Green operators} $G_\pm:\fc^k(M)\to\f^k(M)$ 
(as for $\Box$, we use the same symbol for each $k$). Their main properties are collected below:
\begin{enumerate}
\item[1.] $G_\pm$ is linear;
\item[2.] $\Box G_\pm\alpha=\alpha$ for each $\alpha\in\fc^k(M)$;
\item[3.] $G_\pm\Box\alpha=\alpha$ for each $\alpha\in\fc^k(M)$;
\item[4.] $\supp(G_\pm\alpha)\subseteq J_M^\pm\big(\supp(\alpha)\big)$ for each $\alpha\in\fc^k(M)$;
\item[5.] $(G_\mp\alpha,\beta)=(\alpha,G_\pm\beta)$ for each $\alpha,\beta\in\fc^k(M)$.
\end{enumerate}
Existence and uniqueness of $G_\pm$ fulfilling properties 1--4 follow from $\Box$ being normally hyperbolic, 
while 5 is a consequence of formal self-adjointness. 
For a thorough analysis of the properties of advanced and retarded Green operators 
for normally hyperbolic operators on globally hypoerbolic spacetimes 
the reader should refer to the literature, e.g.\ \cite[Section\ 3.4]{BGP07}. 

Further relevant properties of $\Box$ and of its advanced and retarded Green operators are shown in \cite{Pfe09}. 
For convenience, we recollect them here: 
\begin{align*}
\dd\Box\alpha & =\Box\dd\alpha\,, & \de\Box\alpha & =\Box\de\alpha\,, & \forall\alpha\in\f^k(M)\,;\\
\dd G_\pm\beta & =G_\pm\dd\beta\,, & \de G_\pm\beta & =G_\pm\de\beta\, & \forall\beta\in\fc^k(M)\,.
\end{align*}
The identities on the second line follow from those on the first one. 

We will often make use of the so-called {\em causal propagator} $G=G_+-G_-$ for $\Box$. 
Its properties follow from those of $G_\pm$ and can be condensed in the following exact sequence, 
see\ \cite[Theorem 3.4.7]{BGP07}: 
\begin{equation*}
0\lra\fc^k(M)\overset{\Box}{\lra}\fc^k(M)\overset{G}{\lra}\fsc^k(M)\overset{\Box}{\lra}\fsc^k(M)\lra0\,.
\end{equation*}
Surjectivity of $\Box:\fsc^k(M)\to\fsc^k(M)$ follows from \cite[Corollary 5]{Gin09}. 

As explained in \cite{Bar13,San13}, there are unique extensions of the advanced and retarded Green operators: 
\begin{align*}
G_+:\fpc^k(M)\to\f^k(M)\,,\quad G_-:\ffc^k(M)\to\f^k(M)\,,
\end{align*}
where the subscripts $\mathrm{pc}$ and $\mathrm{fc}$ refer to 
past compact and respectively future compact supports. 
Similarly, one has a unique extension $G:\ftc^k(M)\to\f^k(M)$ of the causal propagator. 
All properties of $G_\pm$ and $G$ carry over to their extensions.

\subsection{Generalized vector potential}
We consider a fixed globally hyperbolic spacetime $M$ of dimension $m$. 
Field configurations are $k$-forms $A$ with $k\in\{1,\dots,m-1\}$ such that $\de\dd A=0$ up to gauge equivalence. 
This means two $k$-forms $A,A^\prime$ are regarded 
as equivalent provided $A^\prime=A+\dd\chi$, $\chi\in\f^{k-1}(M)$. 
Notice that for $k=1$, we are dealing with the case of the vector potential for electromagnetism 
without external sources, see\ \cite{Dim92,FP03,Dap11} for a discussion including external sources as well. 
The model for arbitrary $k$ (including external sources as well) has been analyzed in \cite{Pfe09,SDH12}. 

\begin{remark}\label{remEquivalentVecPot}
Note that the equation $\dd\de(\ast A)=0$ is equivalent to $\de\dd A=0$. 
In fact, replacing the model described above by one 
where the field configuration $B\in\f^k(M)$ fulfils the equation of motion $\dd\de B=0$ 
and whose gauge equivalence is defined by the condition 
$B^\prime\sim B\iff B^\prime=B+\de\chi$, $\chi\in\f^{k+1}(M)$, 
provides a completely equivalent system, related to the original one via the Hodge star operator $\ast$. 
\end{remark}

As a first step, we characterize the space of field configurations up to gauge. 
That done, a suitable space of classical observables will be introduced. 
In the end, Theorem\ \ref{thmSCTCPoincareDuality} will provide a sound motivation 
for the choice of this space of observables. 

We introduce the space of solutions to the field equation $\de\dd A=0$: 
\begin{equation*}
\S_A=\ker\big(\de\dd:\f^k(M)\to\f^k(M)\big)\,.
\end{equation*}
As mentioned above, two solutions are regarded as equivalent 
whenever they differ by $\dd\chi$ for some $\chi\in\f^{k-1}(M)$, i.e.\ their difference lies in 
\begin{equation*}
\G_A=\dd\f^{k-1}(M)\,.
\end{equation*}
Therefore, the space of gauge classes of on-shell field configurations is the quotient $\S_A/\G_A$. 
A convenient characterization of this space is provided below. 

\begin{lemma}\label{lemLorenzGaugeFixing}
Let $M$ be a globally hyperbolic spacetime. 
Each $A\in\S_A$ is gauge equivalent to a configuration $A^\prime\in\S_A$ satisfying the Lorenz gauge, 
namely there exists $\chi\in\f^{k-1}(M)$ such that $A^\prime=A+\dd\chi$ fulfils $\de A^\prime=0$. 
\end{lemma}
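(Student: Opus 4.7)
The strategy is to reduce the problem to solving $\Box\psi = A$ for some $\psi \in \f^k(M)$; the sought gauge parameter is then $\chi := -\de\psi$. Indeed, using the identity $\dd\de = \Box - \de\dd$ on $\f^k(M)$, if $\Box\psi = A$ then
\begin{equation*}
A + \dd\chi \,=\, A - \dd\de\psi \,=\, A - \Box\psi + \de\dd\psi \,=\, \de\dd\psi\,,
\end{equation*}
which is manifestly coclosed since $\de^2 = 0$. Observe moreover that $A^\prime := A + \dd\chi$ automatically lies in $\S_A$, since $\de\dd\dd\chi = 0$.

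To construct $\psi$, I would split $A$ into a past compact and a future compact piece and invert $\Box$ via the extended Green operators for the Laplace--de Rham operator on $k$-forms. By Theorem \ref{thmGlobHyp} one can identify $M$ with $\bbR \times \Sigma$ and choose a smooth cutoff $\rho \in \c(M)$ depending only on the time coordinate, with $\rho \equiv 0$ for $t \leq 0$ and $\rho \equiv 1$ for $t \geq 1$. Set $A_+ := \rho A$ and $A_- := (1-\rho) A$: then $\supp(A_+) \subseteq J_M^+(\{0\}\times\Sigma)$ is past compact and $\supp(A_-) \subseteq J_M^-(\{1\}\times\Sigma)$ is future compact, these support properties following from the standard fact that the intersection of the causal past/future of a compact set with the causal future/past of a spacelike Cauchy surface is compact. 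Applying the extensions $G_+ : \fpc^k(M) \to \f^k(M)$ and $G_- : \ffc^k(M) \to \f^k(M)$ of the advanced and retarded Green operators for $\Box$, one then defines
\begin{equation*}
\psi \,:=\, G_+ A_+ + G_- A_- \,\in\, \f^k(M)\,,
\end{equation*}
and $\Box\psi = A_+ + A_- = A$ by property 2 of the Green operators.

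The only delicate point is really just the construction of $\psi$; once it is in place, the verification reduces to the short algebraic manipulation above, exploiting the commutation of $\Box$ with $\de$ and the identity $\de^2 = 0$. Since the extended Green operators for the Laplace--de Rham operator on a globally hyperbolic spacetime are exactly the tools recalled just before this subsection, I do not anticipate any serious obstacle.
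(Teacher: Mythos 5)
Your proposal is correct and follows essentially the same route as the paper: the paper likewise splits $A$ via a partition of unity $\{f_+,f_-\}$ adapted to past/future compact regions, sets $\chi=-\de\bigl(G_+(f_+A)+G_-(f_-A)\bigr)$, and checks $\de\dd\chi=-\de A$. Your explicit verification that $A+\dd\chi=\de\dd\psi$ and the support estimates for $A_\pm$ are just a slightly more detailed writing of the same argument.
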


\begin{proof}
Take $A\in\S_A$ and consider the equation $\de\dd\chi=-\de A$. 
Consider a partition of unit $\{f_+,f_-\}$ on $M$ such that 
$f_+=1$ in a past compact region, while $f_-=1$ in a future compact one. 
Denoting with $G_\pm$ the advanced/retarded Green operator for $\Box$, 
one can explicitly write down a solution of $\de\dd\chi=-\de A$, 
namely $\chi=-\de\big(G_+(f_+A)+G_-(f_-A)\big)$. 
Setting $A^\prime=A+\dd\chi$ one realizes that $\de A^\prime=\de A+\de\dd\chi=0$ and $A^\prime\in\S_A$. 
\end{proof}

\begin{lemma}\label{lemLorenzSol}
Let $M$ be a globally hyperbolic spacetime and denote with $G$ the causal propagator for $\Box$. 
Each $A\in\S_A$ in the Lorenz gauge is gauge equivalent to $G\omega\in\S_A$ with $\omega\in\ftcde^{k}(M)$, 
namely there exist $\chi\in\f^{k-1}(M)$ and $\omega\in\ftcde^k(M)$ such that $G\omega=A+\dd\chi$. 
\end{lemma}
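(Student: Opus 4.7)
The plan is to adapt the partition-of-unity construction used in the proof of Lemma \ref{lemLorenzGaugeFixing}. Fix smooth functions $f_+, f_-$ on $M$ with $f_+ + f_- = 1$, $f_+$ identically zero to the past of some Cauchy surface $\Sigma_1$, and $f_-$ identically zero to the future of a Cauchy surface $\Sigma_2 \subseteq J_M^+(\Sigma_1)$, so that $\supp(f_+)$ is past compact, $\supp(f_-)$ is future compact, and $\supp(f_+) \cap \supp(f_-)$ is timelike compact. Since $A$ is on-shell and in Lorenz gauge, $\Box A = \de\dd A + \dd\de A = 0$, and the Leibniz rule for $\de$ together with $\de A = 0$ shows that $\de(f_+ A)$ is supported on $\supp(\dd f_+)$, hence lies in $\ftc^{k-1}(M)$.

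I propose to take $\omega = \de\dd(f_+ A)$. Coclosedness $\de\omega = 0$ is immediate from $\de^2 = 0$. For the support, $\supp(\omega) \subseteq \supp(f_+)$ is past compact, while the on-shell identity $\de\dd A = 0$ gives $\omega = -\de\dd(f_- A)$, so $\supp(\omega) \subseteq \supp(f_-)$ is also future compact; hence $\omega \in \ftcde^k(M)$. To conclude, decompose $\Box(f_+ A) = \omega + \dd\de(f_+ A)$. Since $f_+ A \in \fpc^k(M)$, $f_- A \in \ffc^k(M)$ and $\Box(f_+ A) = -\Box(f_- A)$, the identities $G_\pm\Box = \id$ on past/future compact forms yield $G\Box(f_+ A) = G_+\Box(f_+ A) - G_-\Box(f_+ A) = f_+ A + f_- A = A$. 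On the other hand, because $\de(f_+ A) \in \ftc^{k-1}(M)$ and $G$ (extended to timelike compact forms) commutes with $\dd$, we get $G\Box(f_+ A) = G\omega + \dd\, G\de(f_+ A)$. Comparing, $A + \dd\chi = G\omega$ with $\chi = -G\de(f_+ A) \in \f^{k-1}(M)$, as required.

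The principal subtlety is to choose $\omega$ so that coclosedness, timelike compact support, and the representation $A + \dd\chi = G\omega$ are all simultaneously satisfied. The naive choice $\omega = \Box(f_+ A)$ does reproduce $A$ via $G$ and is timelike compact, but it is not coclosed; replacing it with $\omega = \de\dd(f_+ A)$ makes coclosedness automatic, while the Lorenz gauge is exactly what forces the error term $\de(f_+ A)$ to be timelike compact, so that the extended causal propagator can be applied to it and produces an honest gauge transformation in $\f^{k-1}(M)$.
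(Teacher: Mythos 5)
Your proof is correct. It reaches the same algebraic identity as the paper's argument, but by a genuinely more constructive route: where the paper first invokes the abstract characterizations $\ker\big(\Box:\f^k(M)\to\f^k(M)\big)=G\,\ftc^k(M)$ and $\ker\big(G:\ftc^{k-1}(M)\to\f^{k-1}(M)\big)=\Box\,\ftc^{k-1}(M)$ to produce $\theta\in\ftc^k(M)$ with $A=G\theta$ and then $\rho\in\ftc^{k-1}(M)$ with $\Box\rho=\de\theta$ (finally setting $\omega=\theta-\dd\rho$, $\chi=-G\rho$), you exhibit explicit witnesses via the past/future partition of unity: in effect $\theta=\Box(f_+A)$, $\rho=\de(f_+A)$, so that $\omega=\Box(f_+A)-\dd\de(f_+A)=\de\dd(f_+A)$ and $\chi=-G\de(f_+A)$, exactly your formulas. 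What your version buys is self-containedness: you only use the extended identities $G_\pm\Box=\id$ on past/future compact forms and $\dd\,G=G\,\dd$ on timelike compact forms, rather than the exactness statements for the extended causal propagator, and the timelike compactness of both $\omega$ and $\de(f_+A)$ is visible by inspection (the former from the two expressions $\de\dd(f_+A)=-\de\dd(f_-A)$, the latter from $\supp(\dd f_+)\subseteq\supp(f_+)\cap\supp(f_-)$, both hinging on the on-shell and Lorenz conditions exactly as you say). The paper's version is shorter but leans on the exact sequence for $\Box$ with non-standard supports; yours is slightly longer but makes transparent where each support property and each hypothesis on $A$ enters.
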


\begin{proof}
Take $A\in\S_A$ such that $\de A=0$. Then $\Box A=0$, hence $A=G\theta$ for a suitable $\theta\in\ftc^k(M)$. 
The Lorenz gauge condition ensures the existence of $\rho\in\ftc^{k-1}(M)$ such that $\Box\rho=\de\theta$. 
Introducing $\omega=\theta-\dd\rho\in\ftc^k(M)$ and $\chi=-G\rho\in\f^{k-1}(M)$ 
and noting that $\de\rho=0$, one reads $\de\omega=0$ and $G\omega=A+\dd\chi$. 
\end{proof}

\begin{theorem}\label{thmOnShellA}
Let $M$ be a globally hyperbolic spacetime and denote with $G$ the causal propagator for $\Box$. 
The following is an isomorphism of vector spaces: 
\begin{equation*}
\ftcde^k(M)/\de\dd\ftc^k(M)\to\S_A/\G_A\,,\quad[\omega]\mapsto[G\omega]\,.
\end{equation*}
\end{theorem}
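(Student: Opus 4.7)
The plan is to check that $\Phi:[\omega]\mapsto[G\omega]$ is well-defined, surjective, and injective; well-definedness and surjectivity are essentially recorded in what precedes, and injectivity is where the real work lies.

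\emph{Well-definedness and surjectivity.} Using $\dd\,G=G\,\dd$, $\de\,G=G\,\de$, and $\Box G=0$ on $\ftc^k(M)$ (the extension to $\ftc^k(M)$ of the exact sequence for $\Box$ recalled in this section), one computes $\de\dd G\omega=\Box G\omega-\dd\de G\omega=-\dd G\de\omega=0$ for $\omega\in\ftcde^k(M)$, so $G\omega\in\S_A$; similarly $G(\de\dd\tau)=G(\Box\tau-\dd\de\tau)=-\dd G\de\tau\in\G_A$ for $\tau\in\ftc^k(M)$. Surjectivity is immediate from Lemmas \ref{lemLorenzGaugeFixing} and \ref{lemLorenzSol}: each $[A]\in\S_A/\G_A$ admits a representative of the form $G\omega$ with $\omega\in\ftcde^k(M)$.

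\emph{Injectivity.} Suppose $G\omega=\dd\chi$ with $\omega\in\ftcde^k(M)$ and $\chi\in\f^{k-1}(M)$; the task is to exhibit $\beta\in\ftc^k(M)$ with $\omega=\de\dd\beta$. The pivotal observation is that $\chi$ itself satisfies the field equation one degree lower: applying $\de$ to $\dd\chi=G\omega$ gives $\de\dd\chi=G\de\omega=0$. Lemmas \ref{lemLorenzGaugeFixing} and \ref{lemLorenzSol} applied \emph{in degree $k-1$} (with the convention $\f^{-1}(M)=0$ when $k=1$) therefore furnish $\omega'\in\ftcde^{k-1}(M)$ and $\mu\in\f^{k-2}(M)$ such that $\chi=G\omega'-\dd\mu$; hence $\dd\chi=G\dd\omega'$, which rearranges to $G(\omega-\dd\omega')=0$. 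The timelike-compact analog of the exact sequence $\fc^k(M)\overset{\Box}{\lra}\fc^k(M)\overset{G}{\lra}\fsc^k(M)$, available via the extensions of $G_\pm$ to past/future compact supports recalled earlier in this section, then produces $\beta\in\ftc^k(M)$ satisfying $\omega-\dd\omega'=\Box\beta$.

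Expanding $\Box\beta=\de\dd\beta+\dd\de\beta$ recasts the equation as $\omega=\de\dd\beta+\dd\eta$ with $\eta:=\omega'+\de\beta\in\ftc^{k-1}(M)$. It now suffices to show $\eta=0$. Applying $\de$ to $\omega-\dd\omega'=\Box\beta$, using $\de\omega=0$ and $\de\omega'=0$, gives $-\de\dd\omega'=\Box\de\beta$; since $\de\omega'=0$ implies $\de\dd\omega'=\Box\omega'$, this becomes $\Box\eta=0$. Any $\Box$-solution with timelike compact support must vanish (pick a Cauchy surface disjoint from $\supp\eta$ and invoke well-posedness of the Cauchy problem for $\Box$), so $\eta=0$ and $\omega=\de\dd\beta\in\de\dd\ftc^k(M)$, as desired.

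The principal subtlety is the appeal to the exact sequence for $\Box$ on timelike compact sections, which is not formulated explicitly in the excerpt but is the natural companion of the compactly supported version already in use, relying on the extensions of the Green operators discussed in this section. Once this is granted, the recursive use of Lemmas \ref{lemLorenzGaugeFixing}--\ref{lemLorenzSol} in degree $k-1$ closes the argument cleanly; observe in particular that the final cancellation $\eta=0$ uses $\omega'\in\ftcde^{k-1}$ in an essential way, so the intermediate Lorenz-gauge representation of $\chi$ cannot be bypassed.
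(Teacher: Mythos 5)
Your proposal is correct and follows essentially the same route as the paper: well-definedness and the inclusion $\de\dd\ftc^k(M)\subseteq\ker$ via $G\Box=0$ and $G\de\dd\tau=-\dd G\de\tau$, surjectivity via Lemmas \ref{lemLorenzGaugeFixing} and \ref{lemLorenzSol}, and injectivity by noting $\de\dd\chi=0$, applying the same two lemmas in degree $k-1$ to write $\chi=G\omega'-\dd\mu$, extracting $\beta\in\ftc^k(M)$ with $\omega-\dd\omega'=\Box\beta$ from the timelike compact exact sequence for $\Box$, and killing the residual term by uniqueness for $\Box$ on timelike compact sections. Your $\omega'$, $\beta$, $\eta$ are the paper's $\rho$, $\theta$, $\rho+\de\theta$, and your handling of $k=1$ matches the paper's separate treatment of that case.
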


\begin{proof}
For $\omega\in\ftcde^k(M)$, $G\omega$ is coclosed, therefore $\de\dd G\omega=\Box G\omega=0$. 
This means that $G$ maps $\ftcde^k(M)$ to $\S_A$, hence we can consider the map 
\begin{equation*}
\ftcde^k(M)\to\S_A/\G_A\,,\quad\omega\mapsto[G\omega]\,.
\end{equation*} 

We prove surjectivity of this map: Given $[A]\in\S_A/\G_A$, fixing a representative $A\in[A]$ 
and exploiting Lemma\ \ref{lemLorenzGaugeFixing} in the first place and then Lemma\ \ref{lemLorenzSol}, 
we find $\omega\in\ftcde^{k}(M)$ such that $G\omega$ is gauge equivalent to $A$, i.e\ $[G\omega]=[A]$. 

It remains only to check that the kernel of the map mentioned above coincides with $\de\dd\ftc^k(M)$. 
The inclusion in one direction follows from the identity 
$G\de\dd\theta=G(\Box-\dd\de)\theta=-\dd G\de\theta\in\G_A$ for each $\theta\in\ftc^k(M)$, 
hence $\de\dd\ftc^k(M)$ is included in the kernel. 
For the converse inclusion, consider $\omega\in\ftcde^k(M)$ 
such that $G\omega=\dd\chi$ for a suitable $\chi\in\f^{k-1}(M)$. This entails $\de\dd\chi=0$. 
For $k\geq2$, via Lemma\ \ref{lemLorenzGaugeFixing} and Lemma\ \ref{lemLorenzSol}, 
one finds $\rho\in\ftcde^{k-1}(M)$ and $\xi\in\f^{k-2}(M)$ such that $G\rho=\chi+\dd\xi$. 
For $k=1$,  $\Box\chi=\de\dd\chi=0$ and hence one finds $\rho\in\ctc(M)$ such that $G\rho=\chi$. 
Therefore $G\omega=G\dd\rho$ and $\de\dd\rho=\Box\rho$ in both cases, 
hence there exists $\theta\in\ftc^k(M)$ such that $\omega=\dd\rho+\Box\theta$. 
Applying $\de$ to both sides, we deduce $\Box(\rho+\de\theta)=0$. 
This entails $\rho+\de\theta=0$, from which we conclude $\omega=\de\dd\theta$.  
\end{proof}

\begin{remark}\label{remSCOnShellA}
Notice that one could consider the space $\SscA$ of solutions 
with spacelike compact support for the field equation $\de\dd A=0$. 
Accordingly, one has to consider a stricter notion of gauge equivalence provided by $\GscA=\dd\fsc^{k-1}(M)$. 
Using exactly the same arguments, one can prove statements similar to 
Lemma\ \ref{lemLorenzGaugeFixing}, Lemma\ \ref{lemLorenzSol} and Theorem\ \ref{thmOnShellA}, 
where $\f^k$ and $\ftc^k$ are replaced respectively by $\fsc^k$ and $\fc^k$. 
\end{remark}

We focus now the attention on the construction of suitable linear observables 
for the space of field configurations $\S_A/\G_A$. We follow the spirit of \cite{BFR12}, 
hence classical observables are regarded as functionals on the space of field configurations. 
Since we are dealing with linear equations of motion, our approach is very close 
to strategy followed in \cite{BDS13a} for affine field theories 
and its modification developed in \cite{BDS13b} to include the case of gauge theories. 

As a first step, for each $\alpha\in\fc^k(M)$, we consider the linear functional 
\begin{equation*}
\O_\alpha:\f^k(M)\to\bbR\,,\quad\O_\alpha(\beta)=(\alpha,\beta)\,,
\end{equation*}
where $(\cdot,\cdot)$ refers to the pairing \eqref{eqPairing2}. 
Doing so, we define the space of off-shell linear functionals according to 
\begin{equation*}
\Ekin_A=\{\O_\alpha\,:\;\alpha\in\fc^k(M)\}\simeq\fc^k(M)\,,
\end{equation*}
where the isomorphism is provided by non-degeneracy of $(\cdot,\cdot)$. Bearing in mind this isomorphism, 
sometimes we might refer to $\fc^k(M)$ as the space of off-shell linear functionals. 

Not all functionals $\O_\alpha$, $\alpha\in\fc^k(M)$, are invariant under gauge transformations. 
Since we regard configurations differing by a gauge transformation as being equivalent, 
linear functionals which are meant to define observables should be gauge invariant. 
Therefore we restrict to a subspace of linear functionals $\Einv_A\subseteq\Ekin_A$ 
characterized by the property $\O_\alpha\in\Einv_A\iff\O_\alpha(\G_A)=\{0\}$. 
Theorem\ \ref{thmStokes}, together with non-degeneracy of the pairing $(\cdot,\cdot)$, 
provides a convenient characterization for the space of gauge invariant linear functionals: 
\begin{equation*}
\Einv_A=\{\O_\alpha\,:\;\alpha\in\fcde^k(M)\}\simeq\fcde^k(M)\,.
\end{equation*}
These are all the functionals of $\Ekin_A$ which can be consistently evaluated on $\f^k(M)/\G_A$, 
the space of gauge classes of off-shell field configurations. 
By the same argument as above and the inclusion $\fcde^k(M)\subseteq\fc^k(M)$, 
$\Einv_A$ is identified with the isomorphic space $\fcde^k(M)$. 

Up to this point no information about the dynamics of the system has been taken into account. 
To encode dynamics on the space of gauge invariant linear functionals, we force them 
not to be defined for off-shell configurations $A$, namely such that $\de\dd A\neq0$. 
This is obtained taking the quotient of $\Einv_A$ 
by the image of the formal adjoint of the equation of motion operator $\de\dd$. 
Since $(\de\dd\alpha,\beta)=(\alpha,\de\dd\beta)$ for each $\alpha,\beta\in\f^k(M)$ 
such that the intersection of their supports is compact 
(see\ Theorem\ \ref{thmStokes} and the following discussion), $\de\dd$ is the formal adjoint of itself. 
Therefore we define the space of linear observables for $\S_A/\G_A$ as the quotient 
\begin{equation}\label{eqObsA}
\E_A=\Einv_A/\de\dd\fc^k(M)\,.
\end{equation}
The evaluation of $[\alpha]\in\E_A$ on $[A]\in\S_A/\G_A$ is defined by the evaluation on arbitrary representatives, 
namely we set $\O_{[\alpha]}\big([A]\big)=\O_\alpha(A)$ 
for $\alpha\in[\alpha]$ and $A\in[A]$ chosen arbitrarily. 
This definition is well-posed since the difference between 
two representative $\alpha,\alpha^\prime\in[\alpha]$ is an element of $\de\dd\fc^k(M)$, 
while the difference between two representatives $A,A^\prime\in[A]$ lies in $\G_A$. 

The aim is to show that, provided $M$ admits a finite good cover, $\E_A$ is the correct space of linear observables 
for the space of gauge classes of on-shell field configurations $\S_A/\G_A$. 
This is to be intended in the following sense: 
\begin{enumerate}
\item[a.] $\E_A$ contains sufficiently many functionals to distinguish between different elements in $\S_A/\G_A$, 
hence classically there is no reason to take into account more general functionals; 
\item[b.] There are no redundant elements in $\E_A$. 
As a metter of fact, given two elements $[\alpha]\neq[\beta]\in\E_A$, 
there exists an on-shell configuration $[A]\in\S_A/\G_A$ such that the evaluation of $[\alpha]$ and $[\beta]$ 
gives different results, namely $\O_{[\alpha]}\big([A]\big)\neq\O_{[\beta]}\big([A]\big)$. 
This fact entails that any further quotient of $\E_A$ would lead to functionals 
which are not well-defined on the whole space of gauge classes of on-shell field configurations $\S_A/\G_A$. 
Therefore, at least from a classical viewpoint, one should not interpret 
any quotient of $\E_A$ as consisting of observables for $\S_A/\G_A$ 
since there would exist configurations which cannot be tested in a consistent way. 
\end{enumerate}
The first result is achieved using standard Poincar\'e duality, Theorem\ \ref{thmPoincareDuality}, 
while the second follows exploiting Poincar\'e duality between spacelike compact 
and timelike compact cohomologies, Theorem\ \ref{thmSCTCPoincareDuality}. 
These facts are recollected in the theorem stated below. 

\begin{theorem}\label{thmObsA}
On a globally hyperbolic spacetime $M$ the following holds: 
\begin{enumerate}
\item[a.] If $[A]\in\S_A/\G_A$ is such that 
$\O_{[\alpha]}\big([A]\big)=0$ for each $[\alpha]\in\E_A$, then $[A]=0$; 
\item[b.] If $M$ admits a finite good cover and $[\alpha]\in\E_A$ is such that 
$\O_{[\alpha]}\big([A]\big)=0$ for each $[A]\in\S_A/\G_A$, then $[\alpha]=0$. 
\end{enumerate}
\end{theorem}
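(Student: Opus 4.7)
For part (a) my plan is to first show that the vanishing condition forces $\dd A=0$ and then to appeal to the improved Poincar\'e duality of Remark~\ref{remPoincareDualityImproved}, which holds on any orientable manifold without a finite good cover. The first step uses the special test forms $\alpha=\de\gamma$ with $\gamma\in\fc^{k+1}(M)$: these automatically lie in $\fcde^k(M)$, so the hypothesis gives $(\gamma,\dd A)=(\de\gamma,A)=0$ for every such $\gamma$, and non-degeneracy of the standard pairing forces $\dd A=0$. Once $A\in\fdd^k(M)$, the Hodge-star identification ${}_\de([\alpha],[A])=(-1)^{k(m-k)}\la[{\ast}\alpha],[A]\ra$ recalled immediately below Theorem~\ref{thmPoincareDuality} rewrites the hypothesis as $\la\beta,A\ra=0$ for all $\beta\in\fcdd^{m-k}(M)$. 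Remark~\ref{remPoincareDualityImproved} then yields $[A]=0$ in $\hdd^k(M)$, i.e.\ $A\in\G_A$.

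For part (b), my plan is to shift the problem to $\hscdd^k(M)$ and apply Theorem~\ref{thmSCTCPoincareDuality}. Using Theorem~\ref{thmOnShellA} I replace every $[A]\in\S_A/\G_A$ by $[G\omega]$ with $\omega\in\ftcde^k(M)$, and formal skew-adjointness $(\alpha,G\omega)=-(G\alpha,\omega)$ turns the hypothesis into $(G\alpha,\omega)=0$ for all such $\omega$. Setting $\beta=G\alpha$, the intertwining $\de G=G\de$ gives $\beta\in\fscde^k(M)$. Testing against $\omega=\de\theta$ with $\theta\in\ftc^{k+1}(M)$ and using the non-degenerate pairing between spacelike compact and timelike compact forms noted at the start of Section~\ref{secSCTCDuality} produces $\dd\beta=0$, so $[\beta]\in\hscdd^k(M)$; the remaining hypothesis is precisely that $[\beta]$ pairs to zero with every class in $\htcde^k(M)$ via $(\cdot,\cdot)_\de$. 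Here and only here does the finite good cover assumption enter: Theorem~\ref{thmSCTCPoincareDuality} forces $[\beta]=0$, so $G\alpha=\dd\gamma$ for some $\gamma\in\fsc^{k-1}(M)$.

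The final step is to promote $G\alpha=\dd\gamma$ to $\alpha\in\de\dd\fc^k(M)$ by reapplying the same circle of ideas one degree lower. Applying $\de$ gives $\de\dd\gamma=0$, and the spacelike compact analog of Theorem~\ref{thmOnShellA} recorded in Remark~\ref{remSCOnShellA} provides $\rho\in\fcde^{k-1}(M)$ with $\dd\gamma=G\dd\rho$. Then $G(\alpha-\dd\rho)=0$, so the exact Green sequence recalled before Theorem~\ref{thmOnShellA} yields $\eta\in\fc^k(M)$ with $\alpha-\dd\rho=\Box\eta$; imposing $\de\alpha=0=\de\rho$ and using injectivity of $\Box$ on $\fc^{k-1}(M)$ pins down $\rho=-\de\eta$, whence $\alpha=-\dd\de\eta+\Box\eta=\de\dd\eta$. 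I expect the genuine obstacle to be the middle step of part (b): separating the hypothesis into an ``exact test form'' piece (which kills $\dd\beta$) and a genuinely cohomological piece (which kills $[\beta]$ in $\hscdd^k(M)$) is what makes the sc-tc Poincar\'e duality of Section~\ref{secSCTCDuality} indispensable, while the subsequent reconstruction of $\eta$ amounts to bookkeeping with the Green sequence together with the two existence results Theorem~\ref{thmOnShellA} and Remark~\ref{remSCOnShellA}.
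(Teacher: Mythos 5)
Your proposal is correct and follows essentially the same route as the paper: part (a) via exact test forms plus the improved Poincar\'e duality of Remark~\ref{remPoincareDualityImproved}, and part (b) via Theorem~\ref{thmOnShellA}, the sc--tc duality of Theorem~\ref{thmSCTCPoincareDuality}, and the Green-operator exact sequence, with only cosmetic differences (you test with $\de\theta$ for timelike compact $\theta$ where the paper uses compactly supported ones, and you fold the paper's separate $k=1$ case into the general bookkeeping, which is harmless since for $k=1$ the degree-zero statement of Remark~\ref{remSCOnShellA} reduces to $\de\dd=\Box$ on functions).
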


\begin{proof}
Let us start from the first statement. Fixing an arbitrary representative $A\in[A]$, 
the hypothesis means that $(\alpha,A)=0$ for each $\alpha\in\fcde^k(M)$. 
Restricting to $\alpha$ of the form $\de\beta$, $\beta\in\fc^{k+1}(M)$, one deduces $\dd A=0$, 
therefore $[A]$ can be regarded as an element of $\hdd^k(M)$. 
Taking into account the pairing ${}_\de(\cdot,\cdot)$ of \eqref{eqPoincarePairing}, 
the hypothesis translates into ${}_\de([\alpha],[A])=0$ for each $[\alpha]\in\hcde^k(M)$, 
meaning that ${}_\de(\cdot,[A])$ is the trivial element of $\big(\hcde^k(M)\big)^\ast$.  
Remark \ref{remPoincareDualityImproved} entails that 
the map $[A]\in\hdd^k(M)\mapsto{}_\de(\cdot,[A])\in\big(\hcde^k(M)\big)^\ast$ is an isomorphism, 
therefore $[A]$ is the trivial cohomology class in $\hdd^k(M)$, 
meaning that $[A]=0$ in the sense of $\S_A/\G_A$ as well. 

To prove the second statement, we exploit Theorem\ \ref{thmOnShellA} in order to rephrase the hypothesis as 
$\O_{[\alpha]}\big([G\omega]\big)=0$ for each $\omega\in\ftcde^k(M)$. 
Fixing a representative $\alpha\in[\alpha]$ and exploiting the properties of $G$, 
the hypothesis reads $(G\alpha,\omega)=0$ for each $\omega\in\ftcde^k(M)$. 
Taking into account $\omega$ of the form $\de\xi$, $\xi\in\fc^{k+1}(M)$, one deduces $\dd G\alpha=0$. 
Since $\supp(G\alpha)$ is spacelike compact, $G\alpha$ can be regarded as a representative 
of the spacelike compact cohomology class $[G\alpha]\in\hscdd^k(M)$. 
From the hypothesis, $([G\alpha],[\omega])_\de=0$ for each $[\omega]\in\htcde^k(M)$, 
where $(\cdot,\cdot)_\de$ denotes the pairing in \eqref{eqSCTCPairing}. 
Since $M$ has a finite good cover, Theorem\ \ref{thmSCTCPoincareDuality} entails $[G\alpha]=0$, 
hence there exists $\chi\in\fsc^{k-1}(M)$ such that $\dd\chi=G\alpha$. 
From this fact, together with $\de\alpha=0$, the identity $\de\dd\chi=0$ follows. 
For $k\geq2$, Lemma\ \ref{lemLorenzGaugeFixing} and Lemma\ \ref{lemLorenzSol}, 
together with Remark \ref{remSCOnShellA}, provide $\tilde{\chi}\in\fcde^{k-1}(M)$ 
and $\hat{\chi}\in\fsc^{k-2}(M)$ such that $G\tilde{\chi}=\chi+\dd\hat{\chi}$. 
For $k=1$, $\Box\chi=\de\dd\chi=0$, therefore one finds $\tilde{\chi}\in\cc(M)$ such that $G\tilde{\chi}=\chi$. 
In both cases one deduces $G\dd\tilde{\chi}=G\alpha$ and $\de\dd\tilde{\chi}=\Box\tilde{\chi}$. 
The first identity implies $\dd\tilde{\chi}=\alpha+\Box\beta$ for a suitable $\beta\in\fc^k(M)$, 
while, applying $\de$ to both sides and taking into account the second identity too, 
one shows that $\tilde{\chi}=\de\beta$. We conclude that $[\alpha]=[-\de\dd\beta]=0$ in $\E_A$. 
\end{proof}

\begin{remark}
For $k=1$ the second statement of Theorem\ \ref{thmObsA} has interesting implications 
for the spaces of classical observables considered in \cite{BDS13b,BDHS13}. 

In the case of \cite[Section\ 3]{BDS13b} observables are affine functionals whose linear part lies in $\E_A$, 
see\ \eqref{eqObsA}. Furthermore, the linear part of the on-shell condition 
for \cite{BDS13b} coincides with the equation of motion for $A$. 
Therefore, on account of the second statement of Theorem\ \ref{thmObsA}, 
on the space of classical observables of \cite{BDS13b} it is not possible to take any further quotient 
affecting the linear part without restricting at the same time the space of configurations 
on which the equivalence classes of the resulting space of functionals are supposed to be evaluated. 
In fact, in \cite[Section\ 7]{BDS13b} the authors take a quotient of the space of observables 
and simultaneously they restrict the on-shell condition for field 
configurations.\footnote{It is required that on-shell field configurations have zero charge, 
namely that their curvature is a coexact $2$-form, hence, according to Gauss law, they carry null charge.} 

In the case of \cite{BDHS13} it is shown that observables are sufficiently many 
to distinguish between gauge classes of on-shell field configurations, see\ \cite[Theorem 3.2]{BDHS13}. 
Theorem\ \ref{thmObsA} above entails also that there are no redundant observables in \cite{BDHS13}, 
meaning that one can always find a configuration such that 
two different observables provide a different outcome upon evaluation. 
This result is a consequence of two facts: 
First, the linear part of the exponent of an observable in \cite{BDHS13} lies in $\E_A$, 
see\ \cite[eq.\ (3.5)]{BDHS13}. 
Second, once fixed an on-shell configuration as a reference, all others are obtained adding $A\in\S_A$. 
Therefore, for the linear part of the exponent we can apply Theorem\ \ref{thmObsA}. 
That done, it is easy to check that also the purely affine parts coincide. 
This shows that the space of classical observables introduced in \cite[Section\ 3]{BDHS13} 
for the field theoretical model considered there is the correct one with respect to the criteria presented in this paper. 
\end{remark}

\subsection{Generalized Faraday tensor}
The background is provided by a $m$-dimensional globally hyperbolic spacetime $M$. 
In the spirit of electromagnetism, we consider a $k$-form $F$, $k\in\{1\,\dots,m-1\}$. 
The dynamics for $F$ is introduced imposing $\dd F=0$ and $\de F=0$. 
Notice that for $k=2$, $F$ can be interpreted as the Faraday tensor of electromagnetism (without sources), 
see\ e.g.\ \cite{Bon77,DL12} for the case where also an external current is considered. 

More formally, one can regard $\f^k(M)$ as the space of off-shell field configurations. 
Introducing also the linear differential operator 
\begin{equation}\label{eqddboxplusde}
\dd\boxplus\de:\f^k(M)\to\f^{k+1}(M)\oplus\f^{k-1}(M)\,,
\quad(\dd\boxplus\de)\alpha=\dd\alpha\oplus\de\alpha\,, 
\end{equation}
which rules the dynamics of the field, one can specify the on-shell condition according to $(\dd\boxplus\de)F=0$. 
Therefore the space of on-shell field configurations is given by 
\begin{equation*}
\S_F=\ker\big(\dd\boxplus\de:\f^k(M)\to\f^{k+1}(M)\oplus\f^{k-1}(M)\big)\,.
\end{equation*}

The following theorem provides a convenient characterization of $\S_F$.

\begin{theorem}\label{thmOnShellF}
Let $M$ be a globally hyperbolic spacetime and denote with $G$ the causal propagator for $\Box$. 
Then the following is an isomorphism of vector spaces: 
\begin{equation*}
\frac{\ftcdd^{k+1}(M)\oplus\ftcde^{k-1}(M)}{(\dd\boxplus\de)\ftc^k(M)}\to\S_F\,,
\quad[\alpha\oplus\beta]\mapsto G(\de\alpha+\dd\beta)\,.
\end{equation*}
\end{theorem}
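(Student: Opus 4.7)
The plan is to establish well-definedness, surjectivity, and injectivity separately, leveraging two consequences of the Green-operator formalism that the excerpt imports from standard theory: the extended exact sequence
\[
\ftc^k(M)\xrightarrow{\Box}\ftc^k(M)\xrightarrow{G}\f^k(M)\xrightarrow{\Box}\f^k(M)\to 0,
\]
and the uniqueness property that any $\zeta\in\ftc^k(M)$ with $\Box\zeta=0$ must vanish (since $\ftc\subseteq\fpc$ and $G_+\Box=\id$ on $\fpc$). \emph{Well-definedness} is immediate: for $\alpha\in\ftcdd^{k+1}(M)$, $\beta\in\ftcde^{k-1}(M)$ one computes $\dd G(\de\alpha+\dd\beta)=G\dd\de\alpha=G(\Box\alpha-\de\dd\alpha)=G\Box\alpha=0$ and symmetrically $\de G(\de\alpha+\dd\beta)=0$, using $G\Box=0$ on $\ftc^k(M)$; and if $\alpha=\dd\gamma$, $\beta=\de\gamma$ for $\gamma\in\ftc^k(M)$, the image collapses to $G\Box\gamma=0$, so the map descends to the quotient and lands in $\S_F$.

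\emph{Surjectivity.} Given $F\in\S_F$, the equation $\Box F=0$ together with the exact sequence produces $\eta\in\ftc^k(M)$ with $G\eta=F$. From $\dd F=0$ one has $G\dd\eta=0$, so the exact sequence yields $\sigma\in\ftc^{k+1}(M)$ with $\Box\sigma=\dd\eta$; applying $\dd$ to this relation gives $\Box\dd\sigma=0$, and uniqueness forces $\dd\sigma=0$, i.e.\ $\sigma\in\ftcdd^{k+1}(M)$. A parallel argument produces $\tau\in\ftcde^{k-1}(M)$ with $\Box\tau=\de\eta$. Setting $\zeta:=\eta-\de\sigma-\dd\tau$ and using $\dd\sigma=\de\tau=0$, one finds $\dd\zeta=\dd\eta-\dd\de\sigma=\dd\eta-\Box\sigma=0$ and symmetrically $\de\zeta=0$, whence $\Box\zeta=0$ and therefore $\zeta=0$. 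Consequently $\eta=\de\sigma+\dd\tau$, so $[\sigma\oplus\tau]$ is a preimage of $F$.

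\emph{Injectivity.} Suppose $G(\de\alpha+\dd\beta)=0$. The exact sequence provides $\gamma\in\ftc^k(M)$ with $\de\alpha+\dd\beta=\Box\gamma$. Applying $\dd$ and exploiting $\dd\alpha=0$ one gets $\Box\alpha=\dd\de\alpha=\Box\dd\gamma$, so $\Box(\alpha-\dd\gamma)=0$ and uniqueness forces $\alpha=\dd\gamma$; the dual manipulation yields $\beta=\de\gamma$. Thus $\alpha\oplus\beta=(\dd\boxplus\de)\gamma$, i.e.\ $[\alpha\oplus\beta]=0$. The main technical input is the extended exact sequence for $G$ on $\ftc$; given that, the decomposition in the surjectivity step is the most delicate point, but it is handled cleanly by combining the two auxiliary equations $\Box\sigma=\dd\eta$ and $\Box\tau=\de\eta$ with the uniqueness lemma.
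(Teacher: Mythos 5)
Your proposal is correct and follows essentially the same route as the paper: both rely on the exact sequence for $G$ extended to timelike compact supports together with injectivity of $\Box$ on $\ftc^k(M)$, and both handle surjectivity by writing $F=G\eta$, solving $\Box\sigma=\dd\eta$, $\Box\tau=\de\eta$, and recombining to get $\eta=\de\sigma+\dd\tau$. You merely spell out explicitly the steps that the paper compresses into "thus implying $\dd\alpha=0$ and $\de\beta=0$ as well as $\omega=\de\alpha+\dd\beta$".
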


\begin{proof}
Consider the linear differential operator 
\begin{equation}\label{eqdeplusdd}
\de+\dd:\f^{k+1}(M)\oplus\f^{k-1}(M)\to\f^k(M)\,,\quad(\de+\dd)(\alpha\oplus\beta)=\de\alpha+\dd\beta\,.
\end{equation}
Note that $(\dd\boxplus\de)(\de+\dd)(\omega\oplus\theta)=\Box\omega\oplus\Box\theta$ 
for $\omega\in\fdd^{k+1}(M)$ and $\theta\in\fde^{k-1}(M)$. 
Taking $\omega=G\alpha$ and $\theta=G\beta$ for $\alpha\in\ftcdd^{k+1}(M)$ and $\beta\in\ftcde^{k-1}(M)$, 
one reads 
\begin{equation*}
(\dd\boxplus\de)G(\de\alpha+\dd\beta)=(\dd\boxplus\de)(\de+\dd)(G\alpha\oplus G\beta)
=\Box G\alpha\oplus\Box G\beta=0
\end{equation*}
on account of the identities $\dd G=G\dd$ and $\de G=G\de$ on $\ftc^k(M)$. 
Therefore $G(\de+\dd)$ provides a linear map from $\ftcdd^{k+1}(M)\oplus\ftcde^{k-1}(M)$ to $\S_F$. 

We check surjectivity of this map. Any $F\in\S_F$ is both closed and coclosed, therefore $\Box F=0$. 
The properties of the causal propagator entail that we can express $F$ as $G\omega$ 
for a suitable $\omega\in\ftc^k(M)$. $\dd F=0$ and $\de F=0$ ensure 
the existence of $\alpha\in\ftc^{k+1}(M)$ and $\beta\in\ftc^{k-1}(M)$ 
such that $\Box\alpha=\dd\omega$ and $\Box\beta=\de\omega$, 
thus implying $\dd\alpha=0$ and $\de\beta=0$ as well as $\omega=\de\alpha+\dd\beta$. 

We are left with the characterization of the kernel of the map under consideration. 
It is clear that $(\de+\dd)(\dd\boxplus\de)=\Box$, hence $(\dd\boxplus\de)\ftc^k(M)$ is a subspace of the kernel. 
For the converse inclusion, consider $\alpha\oplus\beta\in\ftcdd^{k+1}(M)\oplus\ftcde^{k-1}(M)$ 
such that $G(\de\alpha+\dd\beta)=0$. 
It follows that $\de\alpha+\dd\beta=\Box\omega$ for a suitable $\omega\in\ftc^k(M)$. 
As a consequence, $\alpha=\dd\omega$ and $\beta=\de\omega$, thus concluding the proof. 
\end{proof}

\begin{remark}\label{remSCOnShellF}
As in Remark \ref{remSCOnShellA}, one has a way to represent the space of spacelike compact solutions $\SscF$
of the equations of motion $\dd F=0$ and $\de F=0$ similar to the one of Theorem\ \ref{thmOnShellF}: 
\begin{equation*}
\frac{\fcdd^{k+1}(M)\oplus\fcde^{k-1}(M)}{(\dd\boxplus\de)\fc^k(M)}\to\SscF\,,
\quad[\alpha\oplus\beta]\mapsto G(\de\alpha+\dd\beta)\,.
\end{equation*}
The argument to prove that the map above is an isomorphism is identical to the proof of 
Theorem\ \ref{thmOnShellF} except for the fact that forms are to be replaced by spacelike compact ones, 
while timelike compact forms are to be replaced by compactly supported ones. 
\end{remark}

Observables for the classical field theory of $F$ are defined as functionals on field configurations 
in the spirit of \cite{BFR12}. However, the equation of motion being linear, 
we are allowed to restrict to linear functionals, therefore our approach mimics that of \cite{BDS13a}. 
First a space of off-shell linear functionals in terms of $k$-forms with compact support is introduced: 
\begin{equation*}
\Ekin_F=\{\O_\alpha\,:\;\alpha\in\fc^k(M)\}\simeq\fc^k(M)\,,
\end{equation*}
$\O_\alpha:\f^k(M)\to\bbR$ being defined by the formula $\O_\alpha(\beta)=(\alpha,\beta)$ 
for $\beta\in\f^k(M)$, where $(\cdot,\cdot)$ is defined in \eqref{eqPairing2}. 
From non-degeneracy of $(\cdot,\cdot)$ it follows that $\Ekin_F$ is isomorphic to $\fc^k(M)$. 
This isomorphism will be considered as an identification in the following. 
In particular, sometimes we will refer to $\fc^k(M)$ as the space of off-shell linear functionals. 
At this stage, $\Ekin_F$ cannot be interpreted as a space of observables for the field theory of $F$ 
since no information about the dynamics of $F$ is encoded. Specifically, given $\alpha\in\fc^k(M)$, 
$\O_\alpha$ can be evaluated on any off-shell configuration $\beta\in\f^k(M)$, 
while a proper observable should be defined only on $\S_F$, namely for on-shell configurations. 

To encode the dynamics for $F$ on $\Ekin_F$, we proceed dually. 
Introducing the pairing on $\f^{k+1}(M)\oplus\f^{k-1}(M)$ 
induced by the pairing $(\cdot,\cdot)$ on each summand, see\ \eqref{eqPairing2}, 
one realizes that $\de+\dd:\f^{k+1}(M)\oplus\f^{k-1}(M)\to\f^k(M)$, eq.\ \eqref{eqdeplusdd}, 
is the formal adjoint of $\dd\boxplus\de:\f^k(M)\to\f^{k+1}(M)\oplus\f^{k-1}(M)$, eq.\ \eqref{eqddboxplusde}. 
This fact is a direct consequence of the identity 
$(\de\alpha+\dd\beta,\gamma)=(\alpha,\dd\gamma)+(\beta,\de\gamma)$, 
which holds true for each $\alpha\in\f^{k+1}(M)$, $\beta\in\f^{k-1}(M)$ and $\gamma\in\f^k(M)$ 
such that both $\supp(\alpha)\cap\supp(\gamma)$ and $\supp(\beta)\cap\supp(\gamma)$ are compact. 
This result follows from $\delta$ being the formal adjoint of $\dd$, 
see the discussion following Theorem\ \ref{thmStokes}. 
This suggests to define the space of linear observables for $\S_F$ as the quotient below: 
\begin{equation*}
\E_F=\Ekin_F/(\de+\dd)\big(\fc^{k+1}(M)\oplus\fc^{k-1}(M)\big)\,.
\end{equation*}
An element $[\alpha]\in\E_F$ can be interpreted as an observable for on-shell field configurations 
since $\O_\alpha(F)$ does not depend on the choice of a representative $\alpha\in[\alpha]$ 
provided $F$ is on-shell, namely $F\in\S_F$. This implicitly defines linear functionals 
$\O_{[\alpha]}:\S_F\to\bbR$ according to $\O_{[\alpha]}(F)=\O_\alpha(F)$ 
for each $F\in\S_F$ and regardless of the choice of $\alpha\in[\alpha]$. 

Under the assumption that $M$ admits a finite good cover, 
the following theorem enforces the interpretation of $\E_F$ as the correct space of classical linear observables 
for the space of on-shell field configurations $\S_F$ in the sense stated below: 
\begin{enumerate}
\item[a.] $\E_F$ has sufficiently many elements to distinguish between different on-shell configurations. 
In this respect, there is no reason to consider a larger space of classical observables; 
\item[b.] Given two different elements $[\alpha],[\beta]\in\E_F$, 
there exists an on-shell configuration $F\in\S_F$ such that the evaluation of $[\alpha]$ and $[\beta]$ 
gives different results, namely $\O_{[\alpha]}(F)\neq\O_{[\beta]}(F)$. 
This fact entails that no further quotient of $\E_F$ is allowed 
if one wants to consistently test elements of this quotient on the whole $\S_F$. 
Accordingly, from a classical perspective, one should not interpret any quotient of $\E_F$ 
as consisting of observables for the space of on-shell configurations $\S_F$. 
As a matter of fact there would exist on-shell configurations 
that cannot be tested in a consistent way by the elements of any further quotient of $\E_F$. 
\end{enumerate}
The first statement follows from non-degeneracy of the pairing $(\cdot,\cdot)$ between $k$-forms 
defined in \eqref{eqPairing2}, while the second is a consequence of Poincar\'e duality between cohomologies 
with spacelike compact and timelike compact support, Theorem\ \ref{thmSCTCPoincareDuality}. 

\begin{theorem}
On a globally hyperbolic spacetime $M$ the following holds: 
\begin{enumerate}
\item[a.] If $F\in\S_F$ is such that $\O_{[\alpha]}(F)=0$ for each $[\alpha]\in\E_F$, then $F=0$; 
\item[b.] If $M$ admits a finite good cover and $[\alpha]\in\E_F$ is such that 
$\O_{[\alpha]}(F)=0$ for each $F\in\S_F$, then $[\alpha]=0$. 
\end{enumerate}
\end{theorem}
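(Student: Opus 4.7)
For part (a) there is nothing to do beyond non-degeneracy of the pairing $(\cdot,\cdot)$ in \eqref{eqPairing2}: the hypothesis reads $(\alpha,F)=0$ for every $\alpha\in\fc^k(M)$, so $F=0$.

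For part (b) my plan follows the spirit of the proof of Theorem \ref{thmObsA}(b), but the double constraint $\dd F=0$, $\de F=0$ forces a symmetric treatment on both sides. Fix a representative $\alpha\in[\alpha]$. Via Theorem \ref{thmOnShellF} the hypothesis becomes $(\alpha,G(\de\mu+\dd\nu))=0$ for all $\mu\in\ftcdd^{k+1}(M)$ and $\nu\in\ftcde^{k-1}(M)$. Anti-self-adjointness of $G$ together with the formal adjunction between $\dd$ and $\de$ rewrites this as $(\dd G\alpha,\mu)+(\de G\alpha,\nu)=0$, and since $\mu,\nu$ vary independently the two summands vanish separately. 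Specializing $\mu=\dd\xi$ and $\nu=\de\xi$ for $\xi\in\ftc^k(M)$ and invoking non-degeneracy of $(\cdot,\cdot)$ between $\fsc^k(M)$ and $\ftc^k(M)$, I obtain $\de\dd G\alpha=0$ and $\dd\de G\alpha=0$. Consequently $\dd G\alpha\in\fscde^{k+1}(M)$ and $\de G\alpha\in\fscdd^{k-1}(M)$.

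Now Poincar\'e duality enters. The residual condition $(\dd G\alpha,\mu)=0$ for $\mu\in\ftcdd^{k+1}(M)$ reads ${}_\de([\dd G\alpha],[\mu])=0$ for every $[\mu]\in\htcdd^{k+1}(M)$; because $M$ admits a finite good cover, Theorem \ref{thmSCTCPoincareDuality} yields $[\dd G\alpha]=0$ in $\hscde^{k+1}(M)$, i.e.\ $\dd G\alpha=\de\phi$ for some $\phi\in\fsc^{k+2}(M)$. Symmetrically the pairing $(\cdot,\cdot)_\de$ gives $\de G\alpha=\dd\psi$ for some $\psi\in\fsc^{k-2}(M)$. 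Applying $\dd$ to the first identity forces $\dd\de\phi=0$ (the \quotes{dual Maxwell} equation), while $\psi$ automatically satisfies the Maxwell equation $\de\dd\psi=0$.

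I next lift $\phi$ and $\psi$ to compactly supported forms. The spacelike compact analog of Lemmas \ref{lemLorenzGaugeFixing} and \ref{lemLorenzSol} (Remark \ref{remSCOnShellA}) provides $\tilde\psi\in\fcde^{k-2}(M)$ and $\hat\psi\in\fsc^{k-3}(M)$ with $\psi=G\tilde\psi+\dd\hat\psi$; conjugating the same Remark by the Hodge star reduces the dual Maxwell equation on $\phi$ to the ordinary Maxwell equation on $\ast\phi$, and yields $\tilde\phi\in\fcdd^{k+2}(M)$, $\hat\phi\in\fsc^{k+3}(M)$ with $\phi=G\tilde\phi+\de\hat\phi$. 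Using commutation of $G$ with $\dd$ and $\de$, the identity $\dd G\alpha=\de\phi$ becomes $G(\dd\alpha-\de\tilde\phi)=0$; the exact sequence $0\lra\fc^{k+1}(M)\overset{\Box}{\lra}\fc^{k+1}(M)\overset{G}{\lra}\fsc^{k+1}(M)$ then forces $\dd\alpha=\de\tilde\phi+\Box\beta$ for some $\beta\in\fc^{k+1}(M)$, and analogously $\de\alpha=\dd\tilde\psi+\Box\beta'$ for some $\beta'\in\fc^{k-1}(M)$. Applying $\de$ to the first relation and $\dd$ to the second gives $\de\dd\alpha=\Box\de\beta$ and $\dd\de\alpha=\Box\dd\beta'$; summing yields $\Box\alpha=\Box(\de\beta+\dd\beta')$, and injectivity of $\Box$ on $\fc^k(M)$ (again from the same exact sequence) produces $\alpha=\de\beta+\dd\beta'$, hence $[\alpha]=0$ in $\E_F$. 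The main obstacle I anticipate is setting up the ``dual'' version of Lemmas \ref{lemLorenzGaugeFixing}--\ref{lemLorenzSol} for $\phi$; passing through the Hodge star is the cleanest route, since it reduces the dual Maxwell equation to ordinary Maxwell already handled by Remark \ref{remSCOnShellA}. Minor bookkeeping is needed in the extreme degrees $k\in\{1,m-1\}$, where one of $\phi,\psi$ vanishes trivially for dimensional reasons and the corresponding half of the construction is empty.
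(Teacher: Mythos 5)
Your argument is correct and follows essentially the same route as the paper's proof: the same reduction of the hypothesis via Theorem \ref{thmOnShellF}, the same separation of the two pairing conditions and appeal to Theorem \ref{thmSCTCPoincareDuality}, and the same lifting of the spacelike compact potentials $\phi,\psi$ through the Hodge-dualized version of Remark \ref{remSCOnShellA}; your endgame (applying $\de$ and $\dd$ to the two relations, summing, and invoking injectivity of $\Box$ on $\fc^k(M)$) is in fact a slightly cleaner shortcut than the paper's intermediate step $\tilde{\xi}=-\dd\mu$, $\tilde{\eta}=-\de\nu$. The only caveat is that the \emph{minor bookkeeping} you defer arises not only at $k\in\{1,m-1\}$ but also at $k\in\{2,m-2\}$ and in low dimension, where $\psi$ (resp.\ $\phi$) has degree $0$ (resp.\ $m$) and the Lorenz-gauge lemmas no longer apply as stated: there $\de\dd\psi=0$ reduces to $\Box\psi=0$ and one lifts directly through the exact sequence for $\Box$, exactly as the paper's case analysis spells out.
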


\begin{proof}
The first statement follows from non-degeneracy of the pairing $(\cdot,\cdot):\fc^k(M)\times\f^k(M)\to\bbR$ 
defined in eq.\ \eqref{eqPairing2}. 

Let us consider $[\alpha]$ as in the second statement. 
For each $F\in\S_F$, Theorem\ \ref{thmOnShellF} entails the existence 
of $\omega\in\ftcdd^{k+1}(M)$ and $\theta\in\ftcde^{k-1}(M)$ such that $F=G(\de\omega+\dd\theta)$. 
Exploiting the properties of $\dd$, $\de$ and $G$, the hypothesis can be rephrased as 
$(\dd G\alpha,\omega)+(\de G\alpha,\theta)=0$ for each $\omega\in\ftcdd^{k+1}(M)$ 
and each $\theta\in\ftcde^{k-1}(M)$, therefore both contributions must vanish identically. 
Choosing $\omega$ and $\theta$ respectively of the form $\dd\rho$ and $\de\sigma$, $\rho,\sigma\in\fc^k(M)$, 
one shows that both $\de\dd G\alpha$ and $\dd\de G\alpha$ vanish. 
Since $G\alpha$ has spacelike compact support and recalling \eqref{eqSCTCPairing}, 
we deduce ${}_\de([\dd G\alpha],[\omega])=0$ for each $[\omega]\in\htcdd^{k+1}(M)$ 
and $([\de G\alpha],[\theta])_\de$ for each $[\theta]\in\htcde^{k-1}(M)$. 
On account of Theorem\ \ref{thmSCTCPoincareDuality}, $[\dd G\alpha]$ and $[\de G\alpha]$ 
are the trivial elements of $\hscde^{k+1}(M)$ and respectively of $\hscdd^{k-1}(M)$. 

At this point we distinguish three possibilities: 
\begin{description}
\item[$\boldsymbol{2\leq k\leq m-2}$] Since $[\dd G\alpha]=0$ in $\hscde^{k+1}(M)$ 
and $[\de G\alpha]=0$ in $\hscdd^{k-1}(M)$, there exist $\xi\in\fsc^{k+2}(M)$ and $\eta\in\fsc^{k-2}(M)$ 
such that $\de\xi=\dd G\alpha$ and $\dd\eta=\de G\alpha$. 
Applying $\dd$ to both sides in the first case and $\de$ in the second, we get $\dd\de\xi=0$ and $\de\dd\eta=0$. 
On account of the discussion for the vector potential, 
see\ Remark\ \ref{remEquivalentVecPot}, Remark \ref{remSCOnShellA}, 
Lemma\ \ref{lemLorenzGaugeFixing} and Lemma\ \ref{lemLorenzSol}, 
for $3\leq k\leq m-3$ one finds $\tilde{\xi}\in\fcdd^{k+2}(M)$, $\hat{\xi}\in\fsc^{k+3}(M)$, 
$\tilde{\eta}\in\fcde^{k-2}(M)$ and $\hat{\eta}\in\fsc^{k-3}(M)$ 
such that $G\tilde{\xi}=\xi+\de\hat{\xi}$ and $G\tilde{\eta}=\eta+\dd\hat{\eta}$. 
For $k=2$ and $m\geq5$, nothing changes on the $\xi$-side, while $\de\dd\eta=0$ reduces to $\Box\eta=0$, 
therefore one finds $\tilde{\eta}\in\cc(M)$ such that $G\tilde{\eta}=\eta$. 
Similarly, for $k=m-2$ and $m\geq5$, one reads $\Box\xi=0$, ensuring the existence of $\tilde\xi\in\fc^m(M)$ 
such that $G\tilde{\xi}=\xi$, while nothing changes on the $\eta$-side. 
For $k=2$ and $m=4$, $\dd\de\xi=0$ and $\de\dd\eta=0$ reduce to $\Box\xi=0$ and $\Box\eta=0$, 
hence $\xi=G\tilde{\xi}$ and $\eta=G\tilde{\eta}$ for suitable $\tilde{\xi}\in\fc^4(M)$ and $\eta\in\cc(M)$. 
In any case we obtain $G\de\tilde{\xi}=G\dd\alpha$ and $G\dd\tilde{\eta}=G\de\alpha$, 
from which the identities $\de\tilde{\xi}=\dd\alpha-\Box\mu$ and $\dd\tilde{\eta}=\de\alpha-\Box\nu$ 
for suitable $\mu\in\fc^{k+1}(M)$ and $\nu\in\fc^{k-1}(M)$ follow. 
In addition to that, one has $\dd\de\tilde{\xi}=\Box\tilde{\xi}$ and $\de\dd\tilde{\eta}=\Box\tilde{\eta}$, 
therefore $\tilde{\xi}=-\dd\mu$ and $\tilde{\eta}=-\de\nu$, 
hence $\dd\de\mu=\dd\alpha$ and $\de\dd\nu=\de\alpha$. 
\item[$\boldsymbol{k=1}$ and $\boldsymbol{m\geq3}$] The calculation for $\dd G\alpha$ is left unchanged, 
while $[\de G\alpha]=0$ in $\hscdd^0(M)$ means that $\de G\alpha=0$,  
therefore there exists $\nu\in\cc(M)$ such that $\de\alpha=\Box\nu=\de\dd\nu$. 
\item[$\boldsymbol{k=m-1}$ and $\boldsymbol{m\geq3}$] Nothing changes for $\de G\alpha$ 
with respect to the case $2\leq k\leq m-2$, while $[\dd G\alpha]=0$ in $\hscde^m(M)$ means that 
$\dd G\alpha=0$. Therefore $\dd\alpha=\Box\mu=\dd\de\mu$ for a suitable $\mu\in\fc^m(M)$. 
\item[$\boldsymbol{k=1}$ and $\boldsymbol{m=2}$] In this case the fact that $[\dd G\alpha]$ 
and $[\de G\alpha]$ are the trivial elements of $\hscde^{2}(M)$ and respectively of $\hscdd^{0}(M)$ 
can be rephrased as $\dd G\alpha=0$ and $\de G\alpha=0$, therefore one can find $\mu\in\fc^2(M)$ and 
$\nu\in\cc(M)$ such that $\dd\alpha=\Box\mu=\dd\de\mu$ and $\de\alpha=\Box\nu=\de\dd\nu$. 
\end{description}
Summing up, for any $k\in\{1,\dots,m-1\}$ one has $\dd\de\mu=\dd\alpha$ and $\de\dd\nu=\de\alpha$ 
for suitable $\mu\in\fc^{k+1}(M)$ and $\nu\in\fc^{k-1}(M)$. 
Applying $\de$ on both sides of the first identity and $\dd$ on both sides of the second, 
one has $\alpha=\de\mu+\dd\nu$. 
\end{proof}

\section*{Acknowledgements}
The author wishes to thank Claudio Dappiaggi and Alexander Schenkel for very helpful discussions and suggestions. 
Furthermore, the author is grateful to Igor Khavkine for sharing his ideas 
about de Rham cohomologies with restricted support, which will appear in an upcoming work \cite{Kha14}. 
This research was supported by a Ph.D. scholarship of the University of Pavia.

\end{document}